\documentclass{llncs} 

\usepackage{hyperref}

\usepackage[T1]{fontenc}
\usepackage{textcomp}
\usepackage{listings}
\lstset{ 
numbers=left, numberstyle=\tiny, 
upquote=true,
columns=fullflexible,
literate={*}{{\char42}}1
         {-}{{\char45}}1
}

\usepackage{graphicx}
\usepackage{amsmath} 
\usepackage{amsfonts} 
\usepackage{xspace}
\usepackage{relsize} 

\usepackage{scrextend} 

\usepackage[normalem]{ulem} 

\newtheorem{assumption}{Assumption}
\newtheorem{definitions}[definition]{Definitions}

\renewenvironment{proof}{\par\noindent\emph{Proof. }\ignorespaces}
 {\hspace*{\fill}{\scriptsize $\Box$}\medskip\par}



\newcommand{\intropara}[1]{\vspace*{-5pt}\paragraph{#1}}
\newcommand{\etal}{et al.\xspace}

\newcommand{\mycase}[1]{\mbox{{\underline{Case #1}}:\/}}



\newcommand{\braced}[1]{{ \left\{ #1 \right\} }}
\newcommand{\angled}[1]{{ \left\langle #1 \right\rangle }}


\newcommand{\FUNCTION}[2]{\ensuremath{\operatorname{\sf #1}({#2})}}



\newcommand{\queries}[1]{\Queries_{#1}}

\newcommand{\qinterval}{}
\newcommand{\Top}[1]{\FUNCTION{top}{#1}}

\newcommand{\opt}{\ensuremath{\operatorname{\sf opt}}}

\newcommand{\Reals}{\mathbb{R}}

\newcommand\eps\varepsilon

\newcommand{\textbfrm}[1]{{\text{\rm\bf{#1}}}}

\newcommand{\varQ}{v}  
\newcommand{\query}{\varQ}
\newcommand{\Keys}{{\cal K}}

\newcommand{\Queries}{{\cal Q}}
\newcommand{\Comparisons}{{\cal C}}

\newcommand{\Instance}{{\cal I}}

\newcommand{\compnode}[2]{\ensuremath{\angled{ \query #1 #2 }}}
\newcommand{\leafnode}[1]{\ensuremath{\angled{#1}}}
\newcommand{\weight}{{\omega}}

\newcommand{\TwCompTree}{{\mbox{\sc 2wcst}}\xspace}

\newcommand{\GBST}{{\sc gbst}\xspace}
\newcommand{\GBSTs}{{\sc gbst}s\xspace}
\newcommand{\EQK}[1]{e_{#1}} 
\newcommand{\SPLITK}[1]{s_{#1}} 


\newcommand{\NOvspace}[1]{}

\pagestyle{headings}  

\begin{document}

\title{Optimal search trees with 2-way comparisons\thanks{This is the full version of an extended abstract that appeared in ISAAC~\cite{ISAACPAPER}.} \NOvspace{-10pt}}%
\titlerunning{Optimal search trees}  
%

\author{
  Marek Chrobak\inst{1}\thanks{Research funded by NSF grants CCF-1217314 and CCF-1536026.}
  \and
  Mordecai Golin\inst{2}\thanks{Research funded by HKUST/RGC grant FSGRF14EG28.}
  \and
  J.~Ian Munro\inst{3}\thanks{Research funded by NSERC and the Canada Research Chairs Programme.}
  \and
 Neal E.~Young\inst{1}
}
\authorrunning{Chrobak et al.} 
%
%

\institute{University of California --- Riverside, Riverside, California, USA
  \and
  Hong Kong University of Science and Technology, Hong Kong, China
  \and
  University of Waterloo, Waterloo, Canada
	}

\maketitle 


\NOvspace{-0.2in}


\begin{abstract}
  In 1971, Knuth gave an $O(n^2)$-time algorithm 
  for the classic problem of finding an optimal binary search tree.
  Knuth's algorithm works only for search trees based on 3-way comparisons,
  but most modern computers support only 2-way comparisons
  ($<$, $\le$, $=$, $\ge$, and $>$).
  Until this paper, the problem of finding an optimal search tree using 2-way comparisons 
  remained open --- poly-time algorithms were known only for restricted variants. 
  We solve the general case, giving (i) an $O(n^4)$-time algorithm and
  (ii) an $O(n\log n)$-time additive-3 approximation algorithm.
  For finding optimal \emph{binary split trees},
  we \sout{(iii) obtain a linear speedup}
  and (iv) prove some previous work incorrect.
\end{abstract}


\NOvspace{-0.3in}\addtocounter{section}{-1}
\section{\emph{Erratum (March 2021)}}\label{sec: erratum}
{\em

  The proof of Theorem 3 here
  (and in the conference version of this paper~\cite{ISAACPAPER},
  and in versions 1--4 on arxiv.org) is incorrect.  
  For details of the error in the proof of Theorem 3, see the comments at the end of Section 3.1. 

  The remaining results are correct and have full proofs here,
  but some of the results have been updated substantially
  and now appear in updated manuscripts, as follows:
  \begin{itemize}
  \item For an update of Theorem 1, with a simpler algorithm for 2WCST, see~\cite{chrobak21simple}.

  \item For a result related to (but not supplanting) Theorem 2, see~\cite{chrobak21cost}.

  \item For an update of Theorem 4, with additional counterexamples for Spuler's dynamic program,
    see~\cite{chrobak21huang}.

  \end{itemize}
}


\NOvspace{-0.3in}
\section{Background and statement of results}\label{sec: introduction}

 
In 1971, Knuth~\cite{Knuth1971} gave an $O(n^2)$-time 
dynamic-programming algorithm for a classic problem: 
\emph{given a set $\Keys$ of keys and a probability distribution on queries,
find an optimal binary-search tree $T$.}
As shown in Fig.\,\ref{fig:knuth},
\begin{figure}[t]\centering
  \NOvspace{-2em}
  \includegraphics[height=0.9in]{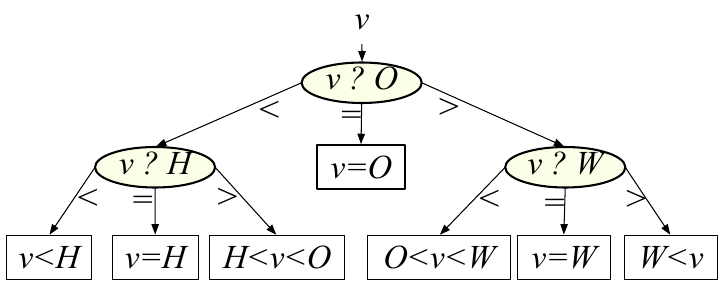}
  \NOvspace{-0.75em}
  \caption{A binary search tree $T$ using 3-way comparisons, for $\Keys=\{H,O,W\}$.}\label{fig:knuth}
  \NOvspace{-1.5em}
\end{figure}%
a search in such a tree for a given value $\query$
compares $\query$ to the root key,
then (i) recurses left if $\query$ is smaller,
(ii) stops if $\query$ equals the key,   
or (iii) recurses right if $\query$ is larger,
halting at a leaf.  
The comparisons made in the search must suffice 
to determine the relation of $\query$ to all keys in $\Keys$.
(Hence, $T$ must have $2|\Keys|+1$ leaves.)
$T$ is optimal if it has minimum \emph{cost},
defined as the expected number of comparisons assuming the query $\query$ 
is chosen randomly from the specified probability distribution.  
 
Knuth assumed \emph{three-way} comparisons at each node.
With the rise of higher-level programming languages,
most computers began supporting only two-way comparisons ($<,\le,=,\ge,>$).
In the 2nd edition of Volume 3 of \emph{The Art of Computer Programming}
\cite[\S6.2.2~ex.~33]{Knuth1998},
Knuth commented
\begin{quote}\footnotesize\NOvspace{-5pt}
  \emph{\ldots machines that cannot make three-way comparisons at once\ldots
    will have to make two comparisons\ldots 
    it may well be best to have a binary tree whose internal nodes specify either an equality test
    {\bf or} a less-than test but not both.  
  } 
\end{quote}
But Knuth gave no algorithm to find a tree built from \emph{two-way} comparisons 
(a \TwCompTree, as in Fig.~\ref{fig:2-way}(a)),
\begin{figure}\centering
  \NOvspace{-2em}
  \includegraphics[width=0.83\textwidth]{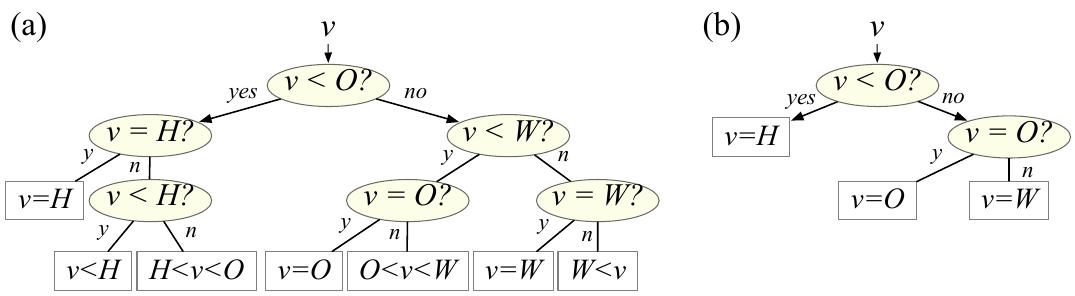}
  \caption{Two \TwCompTree{}s for $\Keys = \{H,O,W\}$;
    tree (b) only handles \emph{successful} queries.}\label{fig:2-way}
  \NOvspace{-1em}
\end{figure}
and, prior to the current paper, poly-time algorithms were known only for restricted variants.
Most notably, in 2002 Anderson \etal~\cite{Anderson2002} 
gave an $O(n^4)$-time algorithm for the \emph{successful-queries} variant of \TwCompTree,
in which each query $\query$ must be a key in $\Keys$,
so only $|\Keys|$ leaves are needed (Fig.~\ref{fig:2-way}(b)).
The \emph{standard} problem allows arbitrary queries,
so $2|\Keys|+1$ leaves are needed (Fig.~\ref{fig:2-way}(a)).
%
For the standard problem, no polynomial-time algorithm was previously known.
We give one for a more general problem that we call \TwCompTree:
\begin{theorem}\label{thm:general}
  \TwCompTree has an $O(n^4)$-time algorithm.
\end{theorem}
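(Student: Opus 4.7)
The plan is to set up a dynamic program with $O(n^3)$ subproblems and $O(n)$ work per subproblem. The central obstacle, absent in Knuth's classical 3-way DP, is that an equality test in a \TwCompTree{} can eliminate a single interior key from the set of queries reaching a subtree, so subtree-subproblems are not simply sub-intervals of $\Keys$; naively the set of ``excluded'' keys could be any subset, giving exponentially many distinct subproblems.

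My first step is to prove a structural lemma: there exists an optimal \TwCompTree{} in which the subproblem at every subtree is determined by an interval $[i,j]$ of $\Keys$ together with a single additional parameter of linear range (for example, the identity of one distinguished excluded key). This would give $O(n^3)$ distinct subproblems. The proof is an exchange argument. Equality tests by ancestors whose test key lies outside $[i,j]$ can be ignored, since they do not change the residual subproblem. Any pair of equality tests by ancestors whose keys both lie in $[i,j]$ can be reorganized --- either by pushing one of them down inside the current subtree, or by replacing it by an inequality test at the interval boundary --- without increasing the expected cost. Iterating this reduction leaves at most one ``surviving'' excluded key per subtree, and the generalized problem \TwCompTree{} is formulated precisely so that these reduced subproblems form a closed class under the recursion.

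Given the lemma, the DP is standard. A subproblem $(i,j,k)$, for an interval $[i,j]$ and possibly-$\bot$ excluded key $k\in[i,j]$, is solved by guessing the root comparison of its optimal subtree. An inequality root ``$<k_m$'' splits it into $(i,m-1,\cdot)$ and $(m,j,\cdot)$, routing the excluded key to whichever side still contains it. An equality root ``$=k_m$'' produces a leaf for $k_m$ and recurses on $(i,j,m)$. With prefix sums of query probabilities, each of the $O(n)$ choices of $k_m$ is evaluated in $O(1)$. Summing over the $O(n^3)$ subproblems gives the $O(n^4)$ bound.

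The main obstacle is the structural lemma itself: proving that ancestral equality tests can indeed be consolidated into one without cost increase, and identifying the right generalized problem \TwCompTree{} so that the reduced subproblems form a closed recursive class describable by one interval and one extra parameter. Once these are in place, the tabulation argument is routine.
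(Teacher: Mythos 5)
Your high-level plan --- find a structural lemma that shrinks the space of reachable subproblems to $O(n^3)$, then run a standard interval DP at $O(n)$ per subproblem --- is exactly the paper's plan. But the specific structural lemma you propose is \emph{false}, and the DP transition built on it is therefore incorrect.

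You claim an optimal tree exists in which every subtree's query set is an interval with \emph{at most one} deleted key, so that a subproblem is $(i,j,k)$ with $k\in[i,j]\cup\{\bot\}$, and an equality root ``$=k_m$'' recurses on $(i,j,m)$ (overwriting the old $k$). This cannot be arranged in general. Consider successful queries on $\Keys=\{K_1,\dots,K_5\}$ with $\beta_2=\beta_4=(1-3\eps)/2$ and $\beta_1=\beta_3=\beta_5=\eps$. For small $\eps$ the optimal tree does $\compnode = {K_2}$ at the root, then $\compnode = {K_4}$ at its ``no'' child, and only then handles $\{K_1,K_3,K_5\}$; this has cost about $3/2$, whereas any tree that does an inequality split before both of those equality tests costs about $2$. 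The subtree below the second equality test has interval $[1,5]$ and \emph{two} deleted keys $\{K_2,K_4\}$; neither of your proposed repairs (push one equality test into the subtree, or turn it into a boundary inequality) applies, because both equality keys are heavy interior keys that must sit near the top. More generally, an optimal tree may begin with $\Theta(n)$ consecutive equality tests, and the residual query set at the bottom of that chain is an interval minus $\Theta(n)$ interior keys.

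The paper's structural lemma (Spuler's conjecture, Theorem~\ref{thm:spuler}) is different and is what actually works: in any optimal, irreducible \TwCompTree, every equality comparison is to a \emph{maximum-likelihood} key among those still in play. Combined with a perturbation to make the $\beta_i$ distinct, this implies that every reachable subproblem is of the form $\subQ(I,h)=I\setminus\Top{I,h}$, i.e.\ a key interval $I$ with its $h$ heaviest keys removed, for $0\le h\le n$. That is still $O(n^3)$ subproblems, but the extra parameter is a \emph{count}, not the identity of a single key. The equality transition is $\subQ(I,h)\to\subQ(I,h+1)$ (it is well-defined precisely because the removed key is forced to be the unique $(h{+}1)$-st heaviest in $I$), and the inequality transition $\prec k$ splits into $\subQ(I^{\mathsmaller\prec}_k,h^{\mathsmaller\prec}_k)$ and $\subQ(I\setminus I^{\mathsmaller\prec}_k,\,h-h^{\mathsmaller\prec}_k)$ with $h^{\mathsmaller\prec}_k=|\Top{I,h}\cap I^{\mathsmaller\prec}_k|$; the invariant ``deleted set $=$ top-$h$ of the interval'' is preserved because the top-$h$ of $I$ restricted to $I^{\mathsmaller\prec}_k$ are exactly the top-$h^{\mathsmaller\prec}_k$ of $I^{\mathsmaller\prec}_k$. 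Your proposal, as written, does not identify or prove anything equivalent to Spuler's conjecture, and the $(i,j,k)$ state space with the recursion $(i,j,\cdot)\to(i,j,m)$ is not a valid closed class of subproblems.
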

We specify an instance $\Instance$ of \TwCompTree
as a tuple $\Instance=(\Keys=\{K_1,\ldots,K_n\}, \Queries, \Comparisons, \alpha, \beta)$.
The set $\Comparisons$ of allowed comparison operators can be any subset of $\{<,\le,=,\ge,>\}$.  
The set $\Queries$ specifies the queries.
A solution is an optimal \TwCompTree $T$ among 
those using operators in $\Comparisons$ 
and handling all queries in $\Queries$.
This definition generalizes both standard \TwCompTree
(let $\Queries$ contain each key and a value between each pair of keys),
and the successful-queries variant (take $\Queries = \Keys$ and $\alpha \equiv 0$).
It further allows any query set $\Queries$ between these two extremes,
even allowing $\Keys \not\subseteq \Queries$.
As usual,
$\beta_i$ is the probability that $\query$ equals $K_i$;
$\alpha_i$ is the probability that $\query$ falls between keys $K_i$ and $K_{i+1}$
(except $\alpha_0=\Pr[\query < K_1]$ and $\alpha_n=\Pr[\query > K_n]$).
%
\footnote
{As defined here, a \TwCompTree $T$ must
determine the relation of the query $\query$ to every key in $\Keys$.  
More generally, one could specify any partition $\cal P$ of $\Queries$,
and only require $T$ to determine, if at all possible using keys in $\Keys$, which set $S\in\cal P$ contains $\query$.  
For example, 
if $\cal P = \{\Keys, \Queries\setminus\Keys\}$,
then $T$ would only need to determine whether $\query\in\Keys$.
We note without proof that Thm.~\ref{thm:general} extends to this more general formulation.}

To prove Thm.~\ref{thm:general},
we prove Spuler's 1994 ``maximum-likelihood'' conjecture:
\emph{in any optimal \TwCompTree tree, each equality comparison is to a key 
in $\Keys$ of maximum likelihood, given the comparisons so far}~\cite[\S6.4\,Conj.\,1]{Spuler1994}.
As Spuler observed, the conjecture implies an $O(n^5)$-time algorithm;
we reduce this to $O(n^4)$ using standard techniques and a new perturbation argument.
Anderson \etal proved the conjecture for their special case~\cite[Cor. 3]{Anderson2002}.
We were unable to extend their proof directly;
our proof uses a different local-exchange argument.

We also give a fast additive-3 approximation algorithm:
\begin{theorem}\label{thm:approximate}
  Given any instance $\Instance=(\Keys, \Queries, \Comparisons, \alpha, \beta)$ of \TwCompTree,
  one can compute a tree of cost at most the optimum plus 3,
  in $O(n\log n)$ time.
\end{theorem}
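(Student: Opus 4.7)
The plan is to combine an entropy lower bound on $\opt$ with a Mehlhorn-style weight-balanced construction that matches the bound within additive $3$.

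\emph{Entropy lower bound.} Let $p$ be the probability distribution over the at-most $2n+1$ possible outcomes of a query $\query$ (either $\query = K_i$, or $\query$ falls in one of the $n+1$ gaps), derived from $\alpha$ and $\beta$. In any 2WCST handling the instance, each outcome reaches a distinct leaf whose depth is the number of comparisons made, and the cost equals the expected leaf depth. Since each comparison yields at most one bit, the root-to-leaf paths form a binary prefix code, so by Shannon's noiseless coding theorem (Kraft's inequality) the expected depth is at least the binary entropy $H(p)$. Hence $\opt \ge H(p)$.

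\emph{Construction.} I would build the tree top-down by a weight-balanced recursion on a contiguous range of keys together with the intervening gaps. Given a subproblem of total weight $W$: if some key $K_j$ has $\beta_j \ge W/2$, place at the root a less-than test on $K_j$ and install an equality test on $K_j$ on the appropriate child, so that the heavy outcome $\query = K_j$ is identified at depth at most $2$ while the remaining weight $\le W/2$ splits into two contiguous subproblems; otherwise pick a less-than split key that balances the weight within a factor of two and recurse on both sides. If $\Comparisons$ forbids the ideal operator, fall back to the closest permitted one. With precomputed prefix sums of $\alpha+\beta$, each split point is located by binary search in $O(\log n)$; because each recursive step shrinks every child subproblem by a constant factor of $W$, the recursion has depth $O(\log n)$, and the total time over $O(n)$ leaves is $O(n\log n)$.

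\emph{Cost analysis.} A standard induction on the recursion shows that every outcome $o$ lies at depth at most $\log_2(1/p_o) + 3$ in the resulting tree, so that $\mathrm{cost}(T) = \sum_o p_o\cdot\mathrm{depth}(o) \le H(p) + 3 \le \opt + 3$.

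\emph{Main obstacle.} Pinning the additive slack at exactly $3$ (rather than some larger unspecified $O(1)$) is delicate: a heavy-key step forces the 2-way tree to pay for both an equality and a less-than comparison where a 3-way tree would pay only one, and careless amortization loses several extra units. The construction must be tuned so that the $\log_2(1/p_o)$ contribution absorbs nearly all of this overhead, leaving only three residual units distributed across the recursion. The remainder of the analysis follows the classical Mehlhorn/Bayer template for near-entropy-optimal search trees.
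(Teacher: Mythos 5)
Your entropy lower bound matches the one the paper uses, and the broad strategy---build a near-entropy-optimal tree directly rather than finding the optimum---is the right one. But the route you take for the construction is different from the paper's, and your version has a real gap.

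The paper does not build a weight-balanced tree from scratch. It \emph{reduces} to a known solved case: it merges each gap $(K_i,K_{i+1})$ into the key to its left (setting $\beta'_i=\beta_i+\alpha_i$, with the two leftmost pieces folded into $\beta'_1$), obtaining a successful-queries instance $\Instance'$ with $\Comparisons'=\{<\}$, runs Yeung's $O(n)$-time algorithm (which is already proved to give cost $\le H_{\Instance'}+2-\beta'_1-\beta'_n$), and then ``unmerges'' by attaching one extra comparison below each leaf $\leafnode{K_i}$ (two below $\leafnode{K_1}$) to separate $v=K_i$ from the gap. The total overhead is bounded by roughly $1+\beta'_1$, so the final cost is at most $H_{\Instance'}+3\le H_{\Instance}+3\le\opt+3$. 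All the delicate constant-chasing is outsourced to Yeung's proof; the paper only has to account for one extra unit.

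Your construction, by contrast, must prove from scratch that every outcome $o$ has depth at most $\log_2(1/p_o)+3$, and you explicitly acknowledge (``Main obstacle'') that you have not done so---this is the entire theorem, not a loose end. Moreover the bound does not obviously hold for the rule you describe. In the heavy-key step the $>K_j$ subproblem sits at depth $d+2$ but has weight only bounded by $W/2$, not $W/4$; the natural invariant ``depth $\le\log_2(1/W)+c$'' therefore loses a unit each time this branch is taken, and without a compensating argument the additive constant grows. Your recursion also has no rule for a heavy \emph{gap}: if some $\alpha_i>W/2$, there is no key to equality-test, and a single $<$-split cannot balance within a factor of two, so the ``otherwise pick a balanced split'' clause fails as stated. (The paper's reduction sidesteps this entirely because $\Instance'$ has $\alpha'\equiv 0$.) Finally, you do not handle $\Comparisons=\{=\}$, which the paper treats as a trivial separate case, nor do you argue the reduction is safe when only $\le$ rather than $<$ is available. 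A weight-balanced construction can very likely be made to work, but as written the proposal states the conclusion of the cost analysis without establishing it, and the construction itself is underspecified at exactly the points where the $+3$ would be won or lost.
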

Comparable results were known for the successful-queries variant
$(\Queries=\Keys)$~\cite{Yeung1991,Anderson2002}.
We approximately reduce the general case to that case.

\intropara{Binary split trees}
``split'' each 3-way comparison in Knuth's 3-way-comparison model 
into two 2-way comparisons within the same node: 
an equality comparison (which, by definition, must be to the maximum-likelihood key)
and a ``$<$'' comparison (to any key)~\cite{Sheil1978,Comer1980,Huang1984,Perl1984,Hester1986}.
The fastest algorithms to find an optimal binary split tree take $O(n^5)$-time:
from 1984 for the successful-queries-only variant ($Q=K$)~\cite{Huang1984};
from 1986~for the standard problem
($Q$ contains queries in all possible relations to the keys in $K$)~\cite{Hester1986}.
\sout{We obtain a linear speedup:}~\footnote
{ERRATUM (MARCH 2021): SEE SECTIONS 0 AND 3.1.}
\begin{theorem}\label{thm:split}
  \sout{Given any instance $\Instance=(\Keys = \{K_1,\ldots,K_n\}, \alpha, \beta)$
  of the standard binary-split-tree problem,
  an optimal tree can be computed in $O(n^4)$ time.}$\,^5$
\end{theorem}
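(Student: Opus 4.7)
The plan is to adapt the dynamic-program speedup used in the proof of Theorem~\ref{thm:general} to the binary-split-tree model. The Hester--Wood $O(n^5)$ algorithm~\cite{Hester1986} solves a dynamic program whose states can be indexed by an interval $[i..j]$ of keys together with a distinguished key $m$ ``inherited'' from above; by definition of binary split trees, $m$ is the maximum-likelihood key of the current active set and must be used for the equality test at the subtree's root. There are $O(n^3)$ such states, and the $O(n^5)$ bound comes from spending $O(n^2)$ per state to enumerate both the split key $k$ for the ``$<$'' test and, for each $k$, a max-likelihood key $m'$ to pass into each child. My goal is to reduce per-state work to $O(n)$.

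The first step is a precomputation: for every interval $[i..j]$, store the indices of the keys with the largest and second-largest likelihood. This takes $O(n^2)$ total time and lets me read off the max-likelihood key of any child interval ``with $m$ removed'' in $O(1)$ time --- it is either the first-max or the second-max of the child's interval, depending on whether $m$ is the first-max. With this table, the two child DP states are uniquely determined by $(i,j,m,k)$, so each state is solved by a single scan over $k$, giving $O(n^4)$ overall.

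Making the precomputation well-defined requires ``the max-likelihood key'' of every relevant interval to be unique, which I handle by exactly the perturbation argument introduced for Theorem~\ref{thm:general}: perturb the key probabilities infinitesimally, commit to a consistent lexicographic tie-breaker, and argue that the resulting optimum is still optimal for the original instance. Without such a perturbation a single $(i,j,m)$ might require considering several candidates for each $m'$, reintroducing the extra factor of $n$.

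The main obstacle I anticipate is the case analysis needed to verify that the first-/second-max table really does give the correct $m'$ for both children across all cases --- whether $m$ lies in the left or right child after the split, and whether $m$ coincides with the first-max of that child. This is bookkeeping rather than deep mathematics, but getting it right is exactly what separates $O(n^4)$ from $O(n^5)$. I expect the overall structure to mirror the proof of Theorem~\ref{thm:general} closely, since a binary split tree is essentially a restricted family of \TwCompTree{}s in which the equality key is forced; in fact the speedup here may be viewed as an instance of the general $O(n^4)$ algorithm specialized to inputs whose optimal trees have this restricted form.
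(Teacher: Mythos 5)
You have the perturbation argument right --- it is the same argument as the paper's Lemma~\ref{obs:distinct} and is the genuinely new ingredient behind Thm.~\ref{thm:split}. But the paper's proof is then essentially one sentence: \cite{Hester1986} \emph{already} gives an $O(n^4)$-time algorithm for standard binary split trees when the key probabilities are distinct, so after perturbing there is nothing left to design. You instead try to re-derive such an algorithm from the $O(n^5)$ one, and that re-derivation contains a concrete error.

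The error is in the precomputation. You claim that, from state $(i,j,m)$ with split key $k$, the new max-likelihood key $m'$ of a child interval is ``either the first-max or the second-max of the child's interval, depending on whether $m$ is the first-max.'' This would hold only if $m$ were the sole key ever missing from the child's active set. But in a split tree the deleted keys accumulate down the path: every ancestor contributes its equality key, and several of these can land in the same child interval. Concretely, at state $(i,j,m)$ the active set is $\{K_\ell \in [i,j] : \beta_\ell \le \beta_m\}$ (keys no heavier than $m$), so the left child's active set is $\{K_\ell \in [i,k-1] : \beta_\ell < \beta_m\}$, and $m'$ is the heaviest key in $[i,k-1]$ that is strictly lighter than $m$ --- which can have arbitrary rank within $[i,k-1]$, not just rank one or two. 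An $O(n^2)$ first/second-max table therefore cannot produce $m'$ in $O(1)$; you would need something like the $\Top{I,h}$ tables behind Lemma~\ref{obs:recurrence3} (an $O(n^3)$-sized structure), or simply the citation to~\cite{Hester1986} that the paper uses.
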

\sout{The proof uses our new perturbation argument  (Sec.~\ref{sec: non-distinct probabilities}) to reduce 
to the case when all $\beta_i$'s are distinct,
then applies a known algorithm~\cite{Hester1986}.}$\,^5$
The perturbation argument 
can \sout{also} be used to simplify Anderson \etal's algorithm~\cite{Anderson2002}.

\intropara{Generalized binary split trees} ({\sc gbst}s) are binary split trees 
without the maximum-likelihood constraint.
Huang and Wong~\cite{StephenHuang1984} (1984)
observe that relaxing this constraint allows cheaper trees
--- the maximum-likelihood conjecture fails here ---
and propose an algorithm to find optimal {\sc gbst}s.
We prove it incorrect!
\begin{theorem}\label{thm:flaw}
  Lemma~4 of~\cite{StephenHuang1984} is incorrect:
  there exists an instance
  --- a query distribution $\beta$ ---
  for which it does not hold,
  and on which their algorithm fails.
\end{theorem}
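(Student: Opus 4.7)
The plan is to establish Theorem~\ref{thm:flaw} by exhibiting an explicit counterexample rather than by any abstract argument. Since the claim is that Lemma~4 of \cite{StephenHuang1984} fails, the cleanest route is to produce one concrete instance $\Instance$ on which (a)~the structural property asserted by that lemma provably does not hold, and (b)~the Huang--Wong dynamic program, when it invokes the lemma, commits to a subtree that is strictly suboptimal, so that the final tree it returns has cost strictly greater than $\opt(\Instance)$.

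First I would restate Lemma~4 of \cite{StephenHuang1984} precisely in our notation. Roughly, it asserts a restrictive combinatorial property of some optimal {\sc gbst} for a contiguous key range --- typically, that the equality-test key at the root can be assumed to lie in a constrained set (for instance, to be the maximum-likelihood key among the ``high-weight'' keys), which is exactly what lets their DP avoid enumerating all $O(n)$ possible equality-test keys and so achieve their claimed running time. Making this statement crisp is important because the whole counterexample hinges on identifying a single violated universal quantifier.

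Next I would construct a small instance $\Instance=(\Keys=\{K_1,\ldots,K_n\},\alpha,\beta)$ with $n$ as small as possible (plausibly $n$ around $5$--$8$, with some carefully chosen $\beta_i$ that are neither all distinct nor trivially ordered, and with $\alpha$ possibly zero so the example lives in the successful-queries world where the lemma is also claimed). On this instance I would (i)~enumerate all {\sc gbst}s by brute force and compute the true optimum, then (ii)~verify that in \emph{every} optimal tree, the root equality-test key violates the property forced by Lemma~4. This double verification is what makes the counterexample airtight: Lemma~4 asserts the existence of some optimal tree satisfying the property, so one must rule out all optimal trees. Finally I would run the Huang--Wong DP on $\Instance$ --- or, more economically, identify the specific subproblem where the DP restricts its candidate set to keys excluded by the property --- and exhibit the resulting tree together with its cost, showing it strictly exceeds $\opt(\Instance)$.

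The main obstacle is purely a search problem: finding the counterexample. The structural constraint of Lemma~4 holds on ``generic'' instances (for example whenever the $\beta_i$ are strongly separated, or on the instances Huang and Wong themselves tested), so one must design $\beta$ carefully, likely with ties or near-ties in the maximum-likelihood positions, so that the cheapest subtree uses an equality-test key the lemma disallows. I would expect to find it by an exhaustive computer search over small $n$ and over rational $\beta$ with small denominators, then present the minimal witness in a table, together with the corresponding optimal tree (drawn explicitly) and the suboptimal tree produced by the algorithm, so the reader can verify both cost computations by hand. No further machinery is needed beyond a careful side-by-side accounting of the expected comparison counts.
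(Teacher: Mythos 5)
Your high-level plan --- exhibit a concrete instance on which Lemma~4 fails and on which the algorithm returns a suboptimal tree --- is the right plan and is exactly what the paper does. But your characterization of Lemma~4, and consequently your estimate of where and how to look for the counterexample, are both off in ways that would prevent the search from succeeding.

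You describe Lemma~4 as ``a restrictive combinatorial property\ldots that the equality-test key at the root can be assumed to lie in a constrained set (for instance, to be the maximum-likelihood key\ldots).'' That is their Lemma~2, which is \emph{correct} (the paper re-proves it in Assumption~\ref{ass:central }). Their Lemma~4 is the dynamic-programming recurrence itself: $p[i,j,d]$ is defined as the minimum cost of a {\sc gbst} for the interval $\{K_{i+1},\ldots,K_j\}$ with \emph{some} set of $d$ keys deleted, and the recurrence computes $p[i,j,d]$ by combining cheapest subtrees. The flaw is a failure of optimal substructure: the state $(i,j,d)$ does not determine which $d$ keys were deleted, nor the resulting subtree \emph{weight}, and both quantities affect the global cost once the deleted keys are re-inserted higher in the tree. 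Two candidate subtrees for $(i,j,d)$ can differ in cost by $1$ yet differ in which heavy key they export upward; choosing the locally cheaper one can cost more globally. Your plan of ``identifying the specific subproblem where the DP restricts its candidate set to keys excluded by the property'' is aimed at a different (non-existent) bug.

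This misdiagnosis leads to an unworkable search plan. The failure mode requires a deleted heavy key to land several levels higher than it otherwise would, so the net effect $(\text{weight gain})\times(\text{depth change})$ can outweigh the subtree's local cost penalty. That forces the tree to be several levels deep, so $n$ around $5$--$8$ is far too small, and exhaustive enumeration of all {\sc gbst}s (with free choice of both equality and split key at every node) at the sizes that actually work is intractable. The paper's witness has $31$ keys, built by hand: a hard-coded $9$-key interval where two $7$-node subtrees $T,T'$ satisfy $\mathrm{cost}(T)=\mathrm{cost}(T')-1$ but $T'$ exports a heavier deleted key, padded with two ``neutral'' balanced subintervals (7 and 15 keys) to place those subtrees at a depth where the trade-off flips. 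Verification is then done not by enumerating all trees but by (i)~running a faithful reimplementation of the Huang--Wong DP (cost 1763) and (ii)~exhibiting one explicit tree of cost 1762; the paper also includes an independent sanity check (increasing one $\beta_j$ makes the algorithm's output \emph{decrease}, which is impossible for a correct algorithm). Note also that, contrary to your step~(ii), there is no need to rule out all optimal trees: it suffices to exhibit any tree cheaper than the algorithm's output. So your strategy is sound in spirit but would need to first pin down the true structural defect of Lemma~4, and then build (rather than search for) the instance around it.
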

This flaw also invalidates two algorithms, proposed in Spuler's thesis~\cite{Spuler1994a},
that are based on Huang and Wong's algorithm.
We know of no poly-time algorithm to find optimal {\sc gbst}s. 
Of course, optimal \TwCompTree{}s are at least as good. 

\intropara{\TwCompTree without equality tests.}
Finding an optimal \emph{alphabetical encoding} has several poly-time algorithms:
by Gilbert and Moore --- $O(n^3)$ time, 1959~\cite{Gilbert1959};
by Hu and Tucker --- $O(n\log n)$ time, 1971~\cite{Hu1971};
and by Garsia and Wachs --- $O(n\log n)$ time but simpler, 1979~\cite{Garsia1977}.
The problem is equivalent 
to finding an optimal 3-way-comparison search tree
when the probability of querying any key is zero ($\beta \equiv 0$)~\cite[\S6.2.2]{Knuth1998}.
It is also equivalent to finding an optimal \TwCompTree
in the successful-queries variant with only ``$<$'' comparisons allowed
($\Comparisons=\{<\}, \Queries = \Keys$)~\cite[\S5.2]{Anderson2002}.
We generalize this observation to prove Thm.~\ref{thm:noequality}:
\begin{theorem}\label{thm:noequality}
  Any \TwCompTree instance
  $\Instance=(\Keys=\{K_1,\ldots,K_n\}, \Queries, \Comparisons, \alpha, \beta)$
  where $=$ is not in $\Comparisons$ (equality tests are not allowed),
  can be solved in $O(n\log n)$ time.
\end{theorem}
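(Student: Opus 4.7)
The plan is to reduce the no-equality \TwCompTree problem to the classical \emph{optimal alphabetic binary tree} (OABT) problem, which is solved in $O(n \log n)$ time by Hu--Tucker or Garsia--Wachs.

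The key structural observation is that without equality each comparison $\query\;?\;K_i$ with $? \in \{<,\le,\ge,>\}$ splits the current set of ``live'' query values into two order-contiguous pieces; by a one-line induction on depth, the live set at every node of a feasible \TwCompTree is an order-contiguous interval, so the tree is alphabetic. To set up the OABT instance I would partition $\Reals$ into the $2n+1$ atoms $A_0=(-\infty,K_1)$, $A_{2i-1}=\{K_i\}$ for $1\le i\le n$, $A_{2i}=(K_i,K_{i+1})$ for $1\le i<n$, and $A_{2n}=(K_n,\infty)$, assigning weight $\beta_i$ to each key atom $A_{2i-1}$ and $\alpha_i$ to each gap atom $A_{2i}$. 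Each $\query\in\Queries$ lies in exactly one atom, and two atoms whose members have different relation-patterns to $\Keys$ must reach distinct leaves.

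Adjacent atoms can be separated by two families of cuts: a cut of type $A_{2i-2}\mid A_{2i-1}$ is realized by any operator in $\{<,\ge\}\cap\Comparisons$ applied to $K_i$, and a cut of type $A_{2i-1}\mid A_{2i}$ by any operator in $\{\le,>\}\cap\Comparisons$ applied to $K_i$. When $\Comparisons$ meets both families, all $2n$ cuts are available and the minimum \TwCompTree cost equals the minimum OABT cost on the weight sequence $\alpha_0,\beta_1,\alpha_1,\ldots,\beta_n,\alpha_n$. When $\Comparisons$ meets only one family (say $\Comparisons\subseteq\{<,\ge\}$), each pair $A_{2i-1},A_{2i}$ must share a leaf; the instance is feasible iff no query in $\Queries$ forces their separation, in which case I would fuse their weights into a single combined atom, reducing again to an OABT on an $O(n)$-length sequence in which every remaining cut is available.

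Finally I would invoke Garsia--Wachs on the resulting OABT instance to obtain an optimal alphabetic layout in $O(n\log n)$ time, and translate each internal node back to a \TwCompTree node by labelling it with any operator from $\Comparisons$ that realizes the corresponding cut. The principal obstacle is the case split on which families of cuts $\Comparisons$ enables --- in particular verifying the feasibility condition and the correctness of the weight-fusing step when only one family is present; the underlying equivalence to OABT itself is immediate from the alphabetic-structure induction. Both directions of the cost equality (OABT cost $\le$ \TwCompTree cost, via the induction, and \TwCompTree cost $\le$ OABT cost, via the explicit labelling) are routine once the cut/operator correspondence is fixed.
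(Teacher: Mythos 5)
Your proposal follows the same route as the paper: reduce the no-equality \TwCompTree problem to optimal alphabetic binary tree by treating the order-contiguous pieces (keys and gaps) as the leaf sequence, then invoke Hu--Tucker/Garsia--Wachs. The cut/operator correspondence is a clean way to phrase what the paper phrases as ``open intervals are functionally equivalent to keys,'' and your fusing step for the one-family case is actually more explicit than the paper's brief dismissal (``otherwise no correct tree exists unless $\Keys=\Queries$'').

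There is, however, one genuine gap. Your OABT instance is always taken on the full sequence $\alpha_0,\beta_1,\alpha_1,\ldots,\beta_n,\alpha_n$ of $2n+1$ atoms, but the theorem allows an arbitrary $\Queries$, so some atoms may contain no query at all. A correct \TwCompTree has no leaf for an empty atom, whereas your OABT is forced to allocate one, and an extra (even zero-weight) leaf can strictly increase the depth of the remaining leaves. Concretely, take $\Keys=\{K_1\}$, $\Queries=\{q_0,q_1\}$ with $q_0<K_1<q_1$, so $\beta_1=0$ and $\alpha_0=\alpha_1=\tfrac12$: the OABT on $(\tfrac12,0,\tfrac12)$ has three leaves and cost $\tfrac32$, but the optimal \TwCompTree is a single $\compnode < {K_1}$ node with two leaves, cost $1$. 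The fix is to include as OABT leaves only those atoms that actually contain a query of $\Queries$ (after fusing, in the one-family case). This is precisely what the paper does implicitly: after deleting queries equivalent to one another, it sets $\Keys'=\Queries$, so $|\Keys'|\le 2n+1$ with one new ``key'' per nonempty atom, and then exhibits the cost-preserving bijection between irreducible \TwCompTree{}s for $\Instance$ and $\Instance'$. With that modification your argument goes through, and the $O(n\log n)$ bound is unchanged.
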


\begin{definitions}\label{def:preliminary}
  Fix an arbitrary instance $\Instance = (\Keys, \Queries, \Comparisons, \alpha, \beta)$. 

  For any node $N$ in any \TwCompTree $T$ for $\Instance$,
  $N$'s \emph{query subset}, $\queries N$, contains queries $\query\in\Queries$
  such that the search for $\query$ reaches $N$.
  The \emph{weight} $\weight(N)$ of $N$ is the probability that a random query $\query$
  (from distribution $(\alpha,\beta)$) is in $\queries N$.
  The weight $\weight(T')$ of any subtree $T'$ of $T$ is $\weight(N)$ where $N$ is the root of $T'$.

  Let $\compnode{<}{K_i}$ 
  denote an internal node having key $K_i$ and comparison operator $<$
  (define $\compnode{\le}{K_i}$ and  $\compnode{=}{K_i}$ similarly).
  Let $\leafnode{K_i}$ denote the leaf $N$ such that $\queries N = \{K_i\}$.
  Abusing notation, $\weight(K_i)$ is a synonym for $\weight(\leafnode{K_i})$, that is, $\beta_i$.

  Say $T$ is \emph{irreducible} if, 
  for every node $N$ with parent $N'$, $\queries N \ne \queries{N'}$. 
\end{definitions}

In the remainder of the paper, we assume that only comparisons
in $\{<,\le,=\}$ are allowed (i.e., $\Comparisons \subseteq \{<,\le,=\}$).
This is without loss of generality,
as ``$\query > K_i$'' 
and ``$\query \ge K_i$'' 
can be replaced, respectively, by ``$\query \le K_i$'' and ``$\query < K_i$.''





\NOvspace{-0.1in}
\section{Proof of Spuler's conjecture}\label{sec: central theorem}


Fix any irreducible, optimal \TwCompTree $T$ 
for any instance $\Instance = (\Keys, \Queries, \Comparisons, \alpha, \beta)$. 
\begin{theorem}[Spuler's conjecture]\label{thm:spuler}
  The key $K_a$ in any equality-comparison node $N = \compnode = {K_a}$
  is a maximum-likelihood key:
  $\beta_a = \max_i \{ \beta_i : K_i \in \queries N\}$.
\end{theorem}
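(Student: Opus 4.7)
The plan is to prove Spuler's conjecture by contradiction, using a local-exchange argument. Assume some equality-comparison node $N = \compnode = {K_a}$ in the irreducible optimal tree $T$ violates the claim: there is a key $K_b \in \queries N$ with $\beta_b > \beta_a$. I will exhibit a strictly cheaper tree $T'$, contradicting optimality of $T$.

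By irreducibility, the yes-branch of $N$ is the leaf $\leafnode{K_a}$, so the query $\query = K_a$ exits $T_N$ at depth~$1$; meanwhile $\query = K_b$ enters the no-subtree $T_0$ of $N$ and follows a unique path $P$ to the leaf $\leafnode{K_b}$ at greater depth. The natural exchange ``swaps the roles'' of $K_a$ and $K_b$: I would replace $N$'s test with $\compnode = {K_b}$ routed on yes to $\leafnode{K_b}$, and within $T_0$ insert an equality test on $K_a$ at the point of $P$ where $\leafnode{K_b}$ used to sit, with $\leafnode{K_a}$ as its yes-child. In the cleanest case---when the parent of $\leafnode{K_b}$ in $T_0$ is itself an equality node $\compnode = {K_b}$---the exchange simply relabels these two equality-test nodes.

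When the exchange is legal, the cost change is easy to compute: the weight of every internal node on $P$ changes by exactly $\beta_a - \beta_b < 0$, since the queries $\query = K_a$ and $\query = K_b$ interchange the set of internal nodes they traverse, while every other node's weight is unchanged. Summing over the internal nodes on $P$ (of which there is at least one) yields a strictly negative net change in $\cost{T'}$ versus $\cost{T}$.

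The crux of the proof is legality. The exchange is legal exactly when, after entering $T_0$, the query $\query = K_a$ actually follows $P$ to reach the newly inserted equality test; equivalently, every $<$- or $\le$-test on $P$ must route $K_a$ and $K_b$ to the same side. The main obstacle I anticipate is handling the case in which some test on $P$ separates $K_a$ from $K_b$. My plan is to choose the witness $K_b$ carefully---for instance as a maximum-likelihood key of $\queries N$---and prove a structural lemma saying that either a non-separating witness must exist (exploiting irreducibility), or a smaller localized swap confined to one branch of the separating test already yields a strict cost decrease. A related subsidiary subcase arises when $P$ contains no equality node $\compnode = {K_b}$ at all, so that the exchange must introduce a new equality node in place of the leaf $\leafnode{K_b}$; here one must verify separately that the gains along $P$ dominate the cost of the new internal node. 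Pinning down these structural cases and cost estimates is the main technical content of the proof.
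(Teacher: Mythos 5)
Your overall framing matches the paper's: reduce to the case that $K_b$'s leaf-parent is an equality node $\compnode = {K_b}$ (the paper's Proposition~\ref{obs:canonical}), then compare an ancestor equality node $\compnode = {K_a}$ with a descendant equality node $\compnode = {K_b}$ via a local exchange. Your cost calculation in the ``clean'' case is exactly right: if no inequality test on the connecting path $P$ separates $K_a$ from $K_b$, swapping the two equality tests changes the cost by $|P|\cdot(\beta_a-\beta_b)$, which gives the contradiction (the paper phrases it as deriving $\beta_a \ge \beta_z$ from optimality, but that is the same argument). The paper additionally observes that, by transitivity, one may assume the path between the two equality nodes contains no further equality nodes, which you leave implicit but which is also clean.

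The genuine gap is exactly the case you flag and then defer: a test $\compnode \prec {K_s}$ on $P$ that separates $K_a$ from $K_b$. Neither of your two proposed remedies works as described. Choosing $K_b$ to be a maximum-likelihood key of $\queries N$ does \emph{not} guarantee a non-separating witness: for instance, the no-child of $N$ can be $\compnode \prec {K_s}$ with $K_a$ and a few low-probability queries on one side and \emph{every} other key --- hence every candidate witness --- on the other side, so every candidate is separated from $K_a$ and no amount of irreducibility changes that. Your fallback, ``a smaller localized swap confined to one branch of the separating test,'' also does not literally apply: once $\compnode \prec {K_s}$ separates $K_a$ from $K_b$, the query $K_a$ never enters the branch that contains $\leafnode{K_b}$, so there is no swap involving $K_a$ that can be confined to that branch. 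What the paper actually does in the separating case is \emph{not} a swap of two equality nodes at all, but a sequence of rotation-type rearrangements (Lemmas~\ref{obs:central 1}--\ref{obs:central 4}) of the subtree rooted at $\compnode = {K_a}$: each rotation preserves correctness and, by optimality, yields an inequality of the form $\weight(K_a)\ge\weight(T')$ or $\weight(T')\ge\weight(T'')$ for certain subtrees $T',T''$; chaining these over a bounded-depth case analysis (child separates / child doesn't but grandchild does / etc.) eventually yields $\weight(K_a)\ge\weight(K_b)$ without ever performing the ``big'' two-node exchange. This rotation-and-chain machinery is the actual technical content of the paper's proof, and it is precisely the part your proposal leaves open.
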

The theorem will follow easily from Lemma~\ref{Lemma:Spuler}:

\begin{lemma}\label{Lemma:Spuler}
  Let internal node $\compnode = {K_a}$
  be the ancestor of internal node $\compnode = {K_z}$.
  Then $\weight(K_a) \ge \weight(K_z)$.
  That is, $\beta_a \ge \beta_z$.
\end{lemma}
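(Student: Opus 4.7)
The plan: proceed by contradiction, assuming $\beta_a < \beta_z$, and build a tree $T'$ strictly cheaper than $T$. I would strong-induct on the number $c \ge 0$ of equality comparisons strictly between $N_a$ and $N_z$. The inductive step is easy: if $c \ge 1$, pick an intermediate $\compnode{=}{K_b}$; the pairs $(N_a, \compnode{=}{K_b})$ and $(\compnode{=}{K_b}, N_z)$ each have strictly fewer intermediate equality comparisons, so by the inductive hypothesis $\beta_a \ge \beta_b$ and $\beta_b \ge \beta_z$, giving $\beta_a \ge \beta_z$ and contradicting the assumption. So the bulk of the work is the base case $c = 0$, where $N_a$ is the \emph{closest} equality-comparison ancestor of $N_z$. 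In that case the path from $N_a$ to $N_z$ has the form $N_a, P_1, \ldots, P_{k-1}, N_z$ with each $P_j$ an inequality; the toward-$N_z$ branches jointly constrain $\query$ to an interval $I \ni K_z$, and $Q := \weight(N_z) \ge \beta_z$.

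In the base case I would split on whether $K_a \in I$. In Case A ($K_a \in I$), $K_a$ and $K_z$ navigate the inequalities identically to $N_z$'s position, so simply swapping the labels at $N_a$ and $N_z$ --- making $N_a$ test $=K_z$ with leaf $\leafnode{K_z}$ and $N_z$ test $=K_a$ with leaf $\leafnode{K_a}$, with everything else unchanged --- yields a valid tree whose cost differs from $T$'s by exactly $(d_z - d_a)(\beta_a - \beta_z) < 0$, the desired contradiction. In Case B ($K_a \notin I$), let $j^* \in \{1, \ldots, k-1\}$ be the smallest index at which $K_a$ fails $P_{j^*}$'s toward-$N_z$ condition; the naive label swap is now invalid since $K_a$'s inequality path would miss $N_z$'s position. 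Instead I would construct $T'$ by promoting the separating inequality $P_{j^*}$ (together with any non-separating inequalities $P_1, \ldots, P_{j^*-1}$ above it) above $N_a$, forcing an early split between $K_a$'s side and $K_z$'s side, and attaching shallow equality tests $\compnode{=}{K_a}$ and $\compnode{=}{K_z}$ on the respective sides with the remaining inequalities, off-path subtrees $D_j$, and $S_z$ rearranged around them. For the simplest sub-case $j^* = 1$ the cost change telescopes to $\beta_a - (k-2)\beta_z - Q < 0$ (negative since $Q \ge \beta_z > \beta_a$), giving the contradiction immediately.

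The main obstacle is Case B with $j^* \ge 2$: a single candidate restructuring can be cost-neutral or worse because several intermediate nodes may gain depth. I would handle it by considering a parametric family of candidate exchanges --- partial promotions of $\compnode{=}{K_z}$ by $\ell \in \{1, \ldots, k\}$ levels, paired with insertions of $\compnode{=}{K_a}$ at various depths along $K_a$'s natural inequality path --- and expressing each cost change as a signed linear combination of $\beta_a, \beta_z, Q$, and the off-path weights $w_j := \weight(D_j)$. Optimality of $T$ forces every such change to be non-negative simultaneously, producing a system of linear inequalities on the $w_j$'s. Combining this system with the basic bounds $Q \ge \beta_z$, $w_j \ge 0$, and the hypothesis $\beta_a < \beta_z$ ought to yield infeasibility, completing the proof. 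This averaging over several interacting local exchanges, rather than a single clever swap, is what distinguishes the argument from Anderson et al.'s direct exchange in the successful-queries variant.
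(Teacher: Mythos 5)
Your inductive reduction over intermediate equality nodes and your Case~A (no separating inequality, direct swap of the two equality nodes) both match the paper's reduction to Assumption~\ref{ass:central } and its quick dispatch of the case where part~(ii) fails. Your Case~B with $j^*=1$ is also valid: I checked your cost formula $\beta_a-(k-2)\beta_z-Q$, and it is correct and strictly negative under $\beta_a<\beta_z\le Q$ and $k\ge 2$. However, Case~B with $j^*\ge 2$ is a genuine gap. ``Considering a parametric family of candidate exchanges\ldots ought to yield infeasibility'' is a plan, not a proof, and this is precisely the hard case you yourself flag as the main obstacle. You would need to actually exhibit the family of restructurings, verify each preserves correctness for queries in every off-path subtree $D_j$ and in $S_z$, compute the cost change of each, and then prove the resulting system of linear inequalities in $\beta_a,\beta_z,Q,w_1,\ldots,w_{k-1}$ infeasible. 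None of that is carried out, the number of free parameters $w_j$ grows with $k$, and it is far from clear that such an LP-duality-style argument closes: the off-path weights obey no monotonicity, and a single fixed finite family of exchanges generally does not certify infeasibility uniformly over all $k$ and all $j^*$.

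The paper sidesteps the long-path bookkeeping entirely by reasoning about \emph{subtree weights} instead of trying to telescope exact leaf-depth changes along the full path from $\compnode = {K_a}$ to $\compnode = {K_z}$. Writing the child as $\compnode \prec {K_b}$ with subtrees $T_0,T_1$ and (when needed) the grandchild as $\compnode \prec {K_c}$ with subtrees $T_{10},T_{11}$, it proves --- each by a constant-size local exchange or rotation, which must be cost-nondecreasing by optimality of $T$ --- inequalities such as $\weight(K_a)\ge\weight(T_1)$, $\weight(T_0)\ge\weight(T_{11})$, and $\weight(T_0)\ge\weight(T_{10})$. Chaining at most three of these with Fact~A yields $\weight(K_a)\ge\weight(T_*)$ for whichever sub-subtree $T_*\in\{T_1,T_{10},T_{11}\}$ contains $\compnode = {K_z}$; and then one is simply \emph{done}, since $\weight(T_*)\ge\weight(K_z)=\beta_z$ holds regardless of how deep $\compnode = {K_z}$ sits inside $T_*$. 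That ``bound by the enclosing subtree's weight and stop'' step is the idea you are missing: it eliminates any dependence on $k$ or $j^*$, so no averaging over a growing family of exchanges is ever needed.
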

\begin{proof}\emph{(Lemma~\ref{Lemma:Spuler})}
Throughout, ``$\compnode \prec {K_i}$'' denotes
a node in $T$ that does an inequality comparison ($\le$ or $<$, not $=$) to key $K_i$.
Abusing notation, in that context, ``$x \prec K_i$''
(or ``$x \not\prec K_i$'') denotes that $x$ passes (or fails) that comparison.

\begin{assumption}\label{ass:central }
{\textbfrm{(i)}}
All nodes on the path from $\compnode = {K_a}$ to $\compnode = {K_z}$ do inequality comparisons.
{\textbfrm{(ii)}}
Along the path, some other node $\compnode \prec {K_s}$
\emph{separates} key $K_a$ from $K_z$:
either $K_a \prec K_s$ but $K_z \not\prec K_s$,
or $K_z \prec K_s$ but $K_a \not\prec K_s$.
\end{assumption}
It suffices to prove the lemma assuming (i) and (ii) above.
(Indeed, if the lemma holds given (i), 
then, by transitivity, the lemma holds in general.
Given (i), if (ii) doesn't hold, then exchanging the two nodes
preserves correctness, changing the cost by $(\weight(K_a) - \weight(K_z))\times d$ for $d\ge 1$,
so $\weight(K_a) \ge \weight(K_z)$ and we are done.)

By Assumption~\ref{ass:central }, 
the subtree rooted at $\compnode = {K_a}$,
call it $T'$,
is as in Fig.\,\ref{fig:central 1}(a):\NOvspace{-5pt}

\begin{figure}[h]\centering\NOvspace{-10pt}
  \noindent\includegraphics[height=1in]{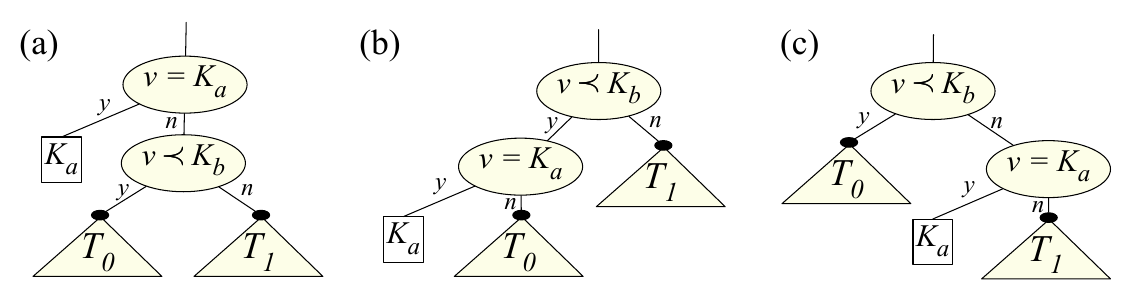}\NOvspace{-10pt}
  \caption{
    (a) The subtree $T'$ rooted at $\compnode = {K_a}$ and possible replacements (b), (c).
  }\label{fig:central 1}\NOvspace{-5pt}
\end{figure}

\NOvspace{-10pt}

\noindent
Let child $\compnode \prec {K_b}$, with subtrees $T_0$ and $T_1$, be as in Fig.~\ref{fig:central 1}.
\begin{lemma}\label{obs:central 1}
If $K_a\prec K_b$, then $\weight(K_a) \ge \weight(T_1)$,
else $\weight(K_a) \ge \weight(T_0)$.
\end{lemma}
\NOvspace{-5pt}
(This and subsequent lemmas in this section are proved in Appendix~\ref{sec:obs:central}.
The idea behind this one is that correctness is preserved by
replacing $T'$ by subtree (b) if $K_a\prec K_b$ or (c) otherwise,
implying the lemma by the optimality of $T$.)

\bigskip\noindent
\mycase{1} \emph{Child $\compnode \prec {K_b}$ separates $K_a$ from $K_z$.}
  If $K_a \prec K_b$, then $K_z \not\prec K_b$,
  so descendant $\compnode = {K_z}$ is in $T_1$,
  and, by this and Lemma~\ref{obs:central 1}, $\weight(K_a) \ge \weight(T_1) \ge \weight(K_z)$,
  and we're done.
  Otherwise $K_a \not\prec K_b$, so $K_z \prec K_b$,
  so descendant $\compnode = {K_z}$ is in $T_0$,
  and, by this and Lemma~\ref{obs:central 1}, $\weight(K_a) \ge \weight(T_0) \ge \weight(K_z)$,
  and we're done.

\smallskip\noindent
\mycase{2}
  \emph{Child $\compnode \prec {K_b}$ does not separate $K_a$ from $K_z$.}
  Assume also that descendant $\compnode = {K_z}$ is in $T_1$.
  (If descendant $\compnode = {K_z}$ is in $T_0$, the proof is symmetric,
  exchanging the roles of $T_0$ and $T_1$.)
  Since descendant $\compnode = {K_z}$ is in $T_1$, 
  and child $\compnode \prec {K_b}$ does not separate $K_a$ from $K_z$,
  we have $K_a \not\prec K_b$ and two facts:

\smallskip
\noindent\emph{Fact A:} $\weight(K_a) \ge \weight(T_0)$
    (by Lemma~\ref{obs:central 1}), and

\smallskip
\noindent\emph{Fact B:}
    the root of $T_1$ does an inequality comparison
    (by Assumption~\ref{ass:central }).

\smallskip

\noindent
By Fact B, subtree $T'$ rooted at $\compnode = {K_a}$ is as in Fig.\,\ref{fig:central 2}(a):
\begin{figure}[h]\centering\NOvspace{-10pt}
  \noindent\includegraphics[height=1.2in]{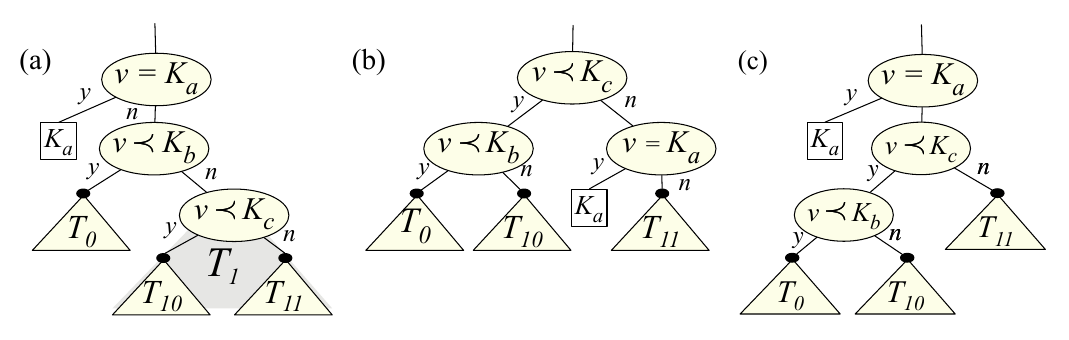}
  \NOvspace{-10pt}
  \caption{
    (a) The subtree $T'$ in Case 2, two possible replacements (b), (c).
  }\label{fig:central 2}\NOvspace{-10pt}
\end{figure}


\noindent
As in Fig.~\ref{fig:central 2}(a), let the root of $T_1$ be $\compnode \prec {K_c}$,
with subtrees $T_{10}$ and $T_{11}$.
\begin{lemma}\label{obs:central 2}\label{obs:central 3}
  (i) $\weight(T_0) \ge \weight(T_{11})$.
  (ii) If $K_a \not\prec K_c$, then $\weight(K_a) \ge \weight(T_1)$.
\end{lemma}
\NOvspace{-5pt}
 (As replacing $T'$ by (b) or (c) preserves correctness; proof in Appendix~\ref{sec:obs:central}.)
 \smallskip




\noindent
\mycase{2.1}
\emph{$K_a \not\prec K_c$.}
      By Lemma~\ref{obs:central 3}(ii), $\weight(K_a) \ge \weight(T_1)$.
      Descendant $\compnode = {K_z}$ is in $T_1$, so $\weight(T_1) \ge \weight(K_z)$.
      Transitively, $\weight(K_a) \ge \weight(K_z)$, and we are done.

\smallskip\noindent
\mycase{2.2}
\emph{$K_a \prec K_c$.}
      By Lemma~\ref{obs:central 2}(i), $\weight(T_0) \ge \weight(T_{11})$.
      By Fact A, $\weight(K_a) \ge \weight(T_{11})$. 

      If $\compnode = {K_z}$ is in $T_{11}$, 
      then $\weight(T_{11}) \ge \weight(K_z)$ and transitively we are done.

      In the remaining case, $\compnode = {K_z}$ is in $T_{10}$.
      $T$'s irreducibility implies $K_z \prec K_c$.
      Since $K_a \prec K_c$ also (Case 2.2), 
      grandchild $\compnode \prec {K_c}$ does not separate $K_a$ from $K_z$,
      and by Assumption~\ref{ass:central }
      the root of subtree $T_{10}$ does an inequality comparison.
      Hence, the subtree rooted at $\compnode  \prec {K_b}$ is 
      as in Fig.\,\ref{fig:central 4}(a):\NOvspace{-5pt}

      \begin{figure}[h]\centering\NOvspace{-10pt}
        \noindent\includegraphics[height=1.1in]{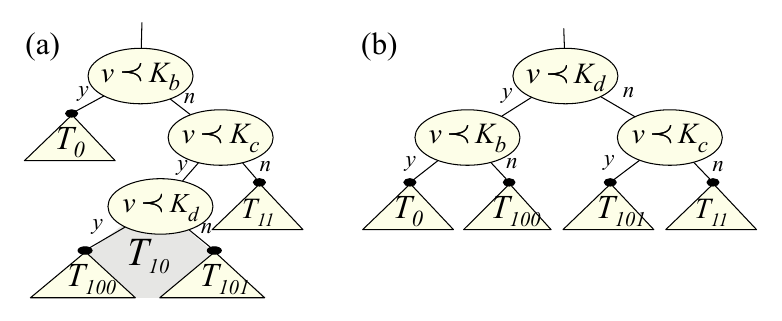}\NOvspace{-5pt}
        \caption{
          (a) The subtree rooted at $\compnode  \prec {K_b}$ in Case 2.2.
          (b) A possible replacement.}\label{fig:central 4}\NOvspace{-10pt}
      \end{figure}


      \NOvspace{-5pt}
      \begin{lemma}\label{obs:central 4}
        $\weight(T_0) \ge \weight(T_{10})$.
      \end{lemma}
      \NOvspace{-5pt}
      (Because replacing (a) by (b) preserves correctness; proof in Appendix~\ref{sec:obs:central}.)

      \smallskip

      Since descendant $\compnode = {K_z}$ is in $T_{10}$,
      Lemma~\ref{obs:central 4} implies $\weight(T_0) \ge \weight(T_{10}) \ge \weight(K_z)$.
      This and Fact A imply $\weight(K_a) \ge \weight(K_z)$.
      This proves Lemma~\ref{Lemma:Spuler}.
\end{proof}

\begin{proposition}\label{obs:canonical}
  If any leaf node $\leafnode {K_\ell}$'s parent $P$
  does not do an equality comparison against key $K_\ell$,
  then changing $P$ so that it does so
  gives an irreducible \TwCompTree $T'$ of the same cost.
\end{proposition}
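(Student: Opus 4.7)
The plan is to argue that the transformation is purely local at the parent $P$ and preserves the partition of $\queries{P}$ into the two children's query subsets, so no query's path length ever changes. First, I would fix notation: let $S$ denote the sibling subtree of $\leafnode{K_\ell}$ under $P$ in $T$. Since $\leafnode{K_\ell}$ is a child of $P$, we have $K_\ell \in \queries{P}$, and correctness of $T$ forces the children of $P$ to partition $\queries{P}$, giving $\queries{S} = \queries{P}\setminus\{K_\ell\}$. Irreducibility of $T$ rules out $\queries{P} = \{K_\ell\}$ (else $P$ and $\leafnode{K_\ell}$ would have equal query subsets), so $\queries{S}\ne\emptyset$ and $S$ is a legitimate nontrivial subtree.

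Next, I would construct $T'$ from $T$ by relabeling $P$ as $P' = \compnode{=}{K_\ell}$, attaching $\leafnode{K_\ell}$ on the ``equal'' side of $P'$ and the unchanged subtree $S$ on the ``not equal'' side. The key observation is that $P'$ induces exactly the same partition of $\queries{P}$ as the original $P$: the only query in $\queries{P}$ equal to $K_\ell$ is $K_\ell$ itself, so the ``equal'' side receives exactly $\{K_\ell\}$, while the ``not equal'' side receives exactly $\queries{P}\setminus\{K_\ell\} = \queries{S}$. Hence every query reaching $P$ descends into the same subtree in $T'$ as in $T$, and since nothing strictly below $P$ is modified, every query terminates at the same leaf along a path of exactly the same length.

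From this I would read off the three conclusions. First, $T'$ is a valid \TwCompTree for $\Instance$: $\leafnode{K_\ell}$ is untouched, and $S$ already correctly handled all of $\queries{S}$ in $T$, so it still does. Second, $\cost{T'} = \cost{T}$ because each query's root-to-leaf path length is preserved. Third, $T'$ remains irreducible because every node in $T'$ has the same query subset as the corresponding node in $T$, and $T$ was irreducible; in particular $\queries{P'} = \queries{P}$ differs from both $\queries{\leafnode{K_\ell}} = \{K_\ell\}$ and $\queries{S}$. The only delicate point, and the one that drives the whole argument, is that an equality test against $K_\ell$ and any comparison that happens to isolate $\{K_\ell\}$ as one child's query subset produce the same partition of $\queries{P}$; once this is observed, there is no substantive obstacle.
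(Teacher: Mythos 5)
Your proof is correct and takes essentially the same approach as the paper's: make the local relabeling of $P$ to $\compnode{=}{K_\ell}$ and argue that the partition of $\queries P$ induced at $P$ is unchanged, so correctness, cost, and irreducibility are all preserved. The paper's version compresses this by first noting that $K_\ell$ must be the maximum or minimum of $\queries P$, whereas you argue directly that both the old and new comparisons split $\queries P$ into $\{K_\ell\}$ and $\queries P\setminus\{K_\ell\}$; both routes reach the same conclusion by the same local-exchange idea.
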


\begin{proof}
  Since $\queries {\leafnode{K_\ell}} = \{K_\ell\}$ 
  and $P$'s comparison operator is in $\Comparisons\subseteq\{<,\le,=\}$,
  it must be that $K_\ell = \max\queries P$ or $K_\ell = \min\queries P$. 
  So changing $P$ to $\compnode = {K_\ell}$
  (with $\leafnode{K_\ell}$ as the``yes'' child
  and the other child the ``no'' child)
  maintains correctness, cost, and irreducibility.
\end{proof}

\begin{proof}\emph{(Thm.~\ref{thm:spuler})}
  Consider any equality-testing node $N = \compnode = {K_a}$
  and any key $K_z\in\queries N$.
  Since $K_z\in\queries N$, node $N$ has descendant leaf $\leafnode {K_z}$.
  Without loss of generality (by Proposition~\ref{obs:canonical}), 
  leaf  $\leafnode {K_z}$'s parent is $\compnode = {K_z}$.
  That parent is a descendant of $\compnode = {K_a}$,
  so $\weight(K_a) \ge \weight(K_z)$ by Lemma~\ref{Lemma:Spuler}.
\end{proof}




\NOvspace{-0.1in}
\section{Proofs of Thm.~\ref{thm:general} (algorithm for \TwCompTree) \sout{and Thm.~\ref{thm:split}}}%
\label{sec: algorithm}\label{sec:proof:general}%
\label{subsec:mainalg}

\newcommand{\SUBP}{_k^{\mathsmaller \prec}}

\newcommand{\subQ}{{\cal S}}

First we prove Thm.~\ref{thm:general}.
Fix an instance $\Instance=(\Keys,\Queries,\Comparisons,\alpha,\beta)$.
Assume for now that all probabilities in $\beta$ are distinct.
For any query subset $\subQ \subseteq \Queries$,
let $\opt(\subQ)$ denote the minimum cost of any \TwCompTree
that correctly determines all queries in subset $\subQ$
(using keys in $\Keys$, comparisons in $\Comparisons$,
and weights from the appropriate restriction of $\alpha$ and $\beta$ to $\subQ$).
Let $\weight(\subQ)$ be
the probability that a random query $\query$ is in $\subQ$.
The cost of any tree for $\subQ$ is the weight of the root ($= \weight(\subQ)$)
plus the cost of its two subtrees,
yielding the following dynamic-programming recurrence:
\begin{lemma}\label{obs:recurrence}
  For any query set $\subQ\subseteq \Queries$
  not handled by a single-node tree, 
  \[
  \opt(\subQ) \,=\,
  \weight(\subQ)+
  \min\begin{cases}
    \displaystyle \min_k\, \opt(\subQ\setminus \{k\})  \text{~~~(if ``$=$'' is in $\Comparisons$, else $\infty$)} \hfill ~~~ (i)
\\
    \displaystyle\min_{k,\prec} \,\opt(\subQ\SUBP) + \opt(\subQ\setminus\subQ\SUBP),
    \hfill ~~~(ii)
  \end{cases}
  \]
  where $k$ ranges over $\Keys$,
  and $\prec$ ranges over the allowed inequality operators (if any),
  and $\subQ\SUBP = \{\query\in \subQ : \query \prec k\}$.
\end{lemma}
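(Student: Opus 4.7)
The plan is to verify that the right-hand side is both an upper and a lower bound on $\opt(\subQ)$. For the upper bound, for each term of the minimum I would exhibit a correct \TwCompTree achieving that value. For case (ii), given any key $k\in\Keys$ and allowed inequality $\prec$, take the tree whose root is $\compnode{\prec}{k}$, with yes-subtree an optimal \TwCompTree for $\subQ\SUBP$ and no-subtree an optimal \TwCompTree for $\subQ\setminus\subQ\SUBP$; this has cost $\weight(\subQ) + \opt(\subQ\SUBP) + \opt(\subQ\setminus\subQ\SUBP)$. For case (i), assuming $=\in\Comparisons$ and $k\in\subQ$, take the tree whose root is $\compnode{=}{k}$ with yes-child the leaf $\leafnode{k}$ (cost $0$) and no-subtree optimal for $\subQ\setminus\{k\}$; its cost is $\weight(\subQ) + \opt(\subQ\setminus\{k\})$. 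When $k\notin\subQ$ the $k$-term of case (i) is $\weight(\subQ)+\opt(\subQ)$, which is no smaller than the bound achieved by the other terms, so ignoring it does not change the minimum.

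For the lower bound, I would fix an optimal \TwCompTree $T^*$ for $\subQ$, taken WLOG to be irreducible by Prop.~\ref{obs:canonical}. Since $\subQ$ is not handled by a single-node tree, the root $N$ of $T^*$ is internal, with yes- and no-subtrees $L,R$ whose query subsets partition $\subQ$. The standard identity
\[
\cost(T^*) \,=\, \weight(N) + \cost(L) + \cost(R) \,=\, \weight(\subQ) + \cost(L) + \cost(R)
\]
follows by grouping each query's contribution to the expected depth into ``the $+1$ for passing through $N$'' plus ``its depth inside whichever subtree it descends into.'' By optimality of $T^*$, we must have $\cost(L)=\opt(\queries{L})$ and $\cost(R)=\opt(\queries{R})$, else replacing the offending subtree with an optimal one would strictly reduce $\cost(T^*)$.

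I would then case-split on the operator at $N$. If $N = \compnode{\prec}{k}$, then by definition $\queries{L}=\subQ\SUBP$ and $\queries{R}=\subQ\setminus\subQ\SUBP$, so $\cost(T^*)$ equals the $(k,\prec)$-term of case (ii). If $N=\compnode{=}{k}$, then $\queries{L}=\subQ\cap\{k\}$ and $\queries{R}=\subQ\setminus\{k\}$; irreducibility of $T^*$ rules out $\queries{L}=\emptyset$ (otherwise $N$ would share its query subset with its no-child), so $k\in\subQ$ and $L$ is forced to be the single leaf $\leafnode{k}$ of cost $0$, making $\cost(T^*)$ equal to the $k$-term of case (i). In either case $\opt(\subQ)=\cost(T^*)$ realizes some term of the right-hand side, giving the matching lower bound.

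The only mildly delicate point is the equality case --- using irreducibility to force $k\in\subQ$ and collapse $L$ to a single leaf --- which is what lets us write the recurrence in the clean form of case (i) (purely in terms of $\opt(\subQ\setminus\{k\})$ with no bookkeeping over whether $k\in\subQ$). Everything else is the standard tree-DP decomposition and draws on nothing from the paper besides Prop.~\ref{obs:canonical}.
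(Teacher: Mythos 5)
Your argument is correct and is the same approach the paper takes, which dispatches the lemma with a single sentence (``the cost of any tree for $\subQ$ is the weight of the root plus the cost of its two subtrees''); you have simply spelled out the two-sided verification in full detail. One small slip worth noting: Proposition~\ref{obs:canonical} concerns rewriting a leaf's parent as an equality test, not removing reducible nodes, so it is not the right citation for ``WLOG irreducible''; the (standard, and correct) justification is that contracting any node whose query set equals its parent's, together with its empty-query-set sibling, never increases cost --- a fact the paper uses implicitly in Section~\ref{sec: central theorem}.
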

Using the recurrence naively to compute $\opt(\Queries)$ yields
exponentially many query subsets $\subQ$, because of line (i).
But, by Thm.~\ref{thm:spuler},
we can restrict $k$ in line (i) to be the maximum-likelihood key in $\subQ$.
With this restriction,
the only subsets $\subQ$ that arise are intervals within $\Queries$,
minus some most-likely keys.
Formally, for each of $O(n^2)$ key pairs $\{k_1, k_2\}\subseteq\Keys\cup\{-\infty,\infty\}$
with $k_1<k_2$,
define four \emph{key intervals}
\[
\begin{array}{rcl@{~~~~}rcl}
  \qinterval(k_1,k_2)&=& \{\query\in\Queries : k_1 < \query < k_2\},
  & \qinterval[k_1,k_2] &=& \{\query\in\Queries : k_1 \le \query \le k_2\}, \\
  \qinterval(k_1,k_2] &=& \{\query\in\Queries : k_1 < \query \le k_2\}, 
  & \qinterval[k_1,k_2) &=&\{\query\in\Queries : k_1 \le \query < k_2\}.
\end{array}
\]
For each of these $O(n^2)$ key intervals $I$, and each integer $h\le n$,
define $\Top{I,h}$ to contain the $h$ keys in $I$ with the $h$ largest $\beta_i$'s.
Define $\subQ(I,h) = I\setminus \Top{I,h}$.  
Applying the restricted recurrence to $\subQ(I,h)$ gives a simpler recurrence:
\begin{lemma}\label{obs:recurrence3}
  If $\subQ(I,h)$ is not handled by a one-node tree,
  then $\opt(\subQ(I,h))$ equals
  \[
  \weight(\subQ(I,h))+
  \min\begin{cases}
    \displaystyle \opt(\subQ(I,h+1))  \text{~~~(if equality is in $\Comparisons$, else $\infty$)} \hfill ~~~ (i)
\\
    \displaystyle\min_{k,\prec}
    \,\opt(\subQ(I\SUBP,h\SUBP))
    + \opt(\subQ(I\setminus I\SUBP,h-h\SUBP)),
    \hfill ~~~(ii)
  \end{cases}
  \]
  where key interval $I\SUBP = \{\query\in I : \query \prec k\}$,
  and $h\SUBP = |\Top{I,h}\cap I\SUBP|$.
\end{lemma}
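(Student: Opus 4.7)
The plan is to specialize the general recurrence of Lemma~\ref{obs:recurrence} to the subset $\subQ = \subQ(I,h)$ and show that each of its two cases collapses to the corresponding case of the simplified recurrence. For line~(i), the equality branch, I would invoke Spuler's conjecture (Thm.~\ref{thm:spuler}) applied to the root of an optimal irreducible tree for $\subQ(I,h)$: if that root is an equality node $\compnode{=}{K_a}$, then $K_a$ is a maximum-likelihood key of $\queries N = \subQ(I,h)$. Because all $\beta_i$'s are assumed distinct, this key is uniquely the $(h{+}1)$-st most likely key of $I$, i.e.\ the single element of $\Top{I,h+1}\setminus \Top{I,h}$. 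Removing it from $\subQ(I,h)$ yields exactly $I\setminus \Top{I,h+1}=\subQ(I,h+1)$, so the $\min_k$ in line~(i) of Lemma~\ref{obs:recurrence} collapses to the single term $\opt(\subQ(I,h+1))$.

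For line~(ii), the inequality branch, I would fix any $k\in\Keys$ and allowed operator $\prec$, and analyze the split of $\subQ(I,h)$ into $\{\query\in\subQ(I,h):\query\prec k\}$ and its complement. The prefix $I\SUBP=\{\query\in I:\query\prec k\}$ is itself a key interval, and since $I\SUBP\subseteq I$,
\[
\{\query\in \subQ(I,h):\query\prec k\} \,=\, I\SUBP\setminus (\Top{I,h}\cap I\SUBP).
\]
The main step is the identity $\Top{I,h}\cap I\SUBP = \Top{I\SUBP,h\SUBP}$, where $h\SUBP=|\Top{I,h}\cap I\SUBP|$. This holds because every key in $I\SUBP\setminus \Top{I,h}$ has strictly smaller $\beta$ than every key in $\Top{I,h}$ (distinct probabilities make ``top $h$'' well defined), and hence smaller than every key in $\Top{I,h}\cap I\SUBP$; so those $h\SUBP$ keys really are the top $h\SUBP$ of $I\SUBP$. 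A symmetric argument shows the complement equals $\subQ(I\setminus I\SUBP,\, h-h\SUBP)$. Since every split appearing in Lemma~\ref{obs:recurrence}\,(ii) for $\subQ(I,h)$ arises from some pair $(k,\prec)$ and has this form, the two minima over $(k,\prec)$ agree.

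Combining the two cases yields the claimed identity, provided $\subQ(I,h)$ is not handled by a single-node tree (the hypothesis inherited from Lemma~\ref{obs:recurrence}). The hard part is the ``top-of-subinterval'' identity; its clean statement relies on all $\beta_i$'s being distinct, which is precisely why the setup in Sec.~\ref{sec:proof:general} invokes a perturbation argument to eliminate ties globally. With that assumption in place, the remaining verifications --- that the restricted equality branch gives the correct subset, and that the index bookkeeping $h\mapsto (h\SUBP,\,h-h\SUBP)$ matches on both sides of the split --- are routine.
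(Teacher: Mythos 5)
Your proof is correct, and it is essentially the argument the paper has in mind: the paper states Lemma~\ref{obs:recurrence3} as a direct consequence of restricting line~(i) of Lemma~\ref{obs:recurrence} via Thm.~\ref{thm:spuler} and verifying that the resulting subsets have the form $\subQ(I',h')$, which is exactly what you do. You also make explicit the one nontrivial identity the paper leaves implicit, namely $\Top{I,h}\cap I\SUBP = \Top{I\SUBP,h\SUBP}$ (and its symmetric counterpart for $I\setminus I\SUBP$), together with its dependence on the distinctness of the $\beta_i$'s; that is a genuine gap-filling improvement over the paper's terse presentation, not a deviation from it.
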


Now, to compute $\opt(\Queries)$, each query subset that arises
is of the form $\subQ(I,h)$ where $I$ is a key interval and $0\le h \le n$.
With care, each of these $O(n^3)$ subproblems can be solved in $O(n)$ time, 
giving an $O(n^4)$-time algorithm.
In particular, represent each key-interval $I$ by its two endpoints.
For each key-interval $I$ and integer $h\le n$, precompute
$\weight(\subQ(I,h))$, and $\Top{I,h}$, and the $h$'th largest key in $I$.
Given these $O(n^3)$ values (computed in $O(n^3\log n)$ time),
the recurrence for $\opt(\subQ(I,h))$ can be evaluated in $O(n)$ time.
In particular, for line (ii), one can enumerate all $O(n)$ pairs $(k,h\SUBP)$ 
in $O(n)$ time total, and, for each, 
compute $I\SUBP$ and $I\setminus I\SUBP$ in $O(1)$ time.
Each base case can be recognized and handled (by a cost-0 leaf) in $O(1)$ time,
giving total time $O(n^4)$.  This proves Thm.~\ref{thm:general}
when all probabilities in $\beta$ are distinct;
Sec.~\ref{sec: non-distinct probabilities} finishes the proof.



\subsection{Perturbation argument; proofs of Theorems~\ref{thm:general} \sout{and~\ref{thm:split}}\,$^7$}%
\label{sec: non-distinct probabilities}


Here we show that,
without loss of generality, in looking for an optimal search tree, 
one can assume that the key probabilities (the $\beta_i$'s) are all distinct.
Given any instance $\Instance=(\Keys,\Queries,\Comparisons,\alpha,\beta)$,
construct instance $\Instance'=(\Keys,\Queries,\Comparisons,\alpha,\beta')$,
where $\beta'_j = \beta_j + j\eps$ and $\eps$ is a positive infinitesimal
(or $\eps$ can be understood as a sufficiently small positive rational).
To compute (and compare) costs of trees with respect to $\Instance'$,
maintain the infinitesimal part of each value separately
and extend linear arithmetic component-wise in the natural way:
\begin{enumerate}
  \item Compute $z\times(x_1 + x_2\,\eps)$ as $(z x_1)+(z x_2)\eps$, where $z,x_1,x_2$ are any rationals,
  \item compute $(x_1 + \eps x_2) + (y_1+\eps y_2)$ as $(x_1+x_2) + (y_1+y_2)\eps$, 
  \item and say $x_1 + \eps x_2 <  y_1 + \eps y_2$ iff $x_1 < y_1$, or\,  $x_1=y_1 \,\wedge\, x_2<y_2$.
\end{enumerate}

\begin{lemma}\label{obs:distinct}
  In the instance $\Instance'$, all key probabilities $\beta'_i$ are distinct.
  If a tree $T$ is optimal w.r.t.~$\Instance'$, 
  then it is also optimal with respect to $\Instance$.
\end{lemma}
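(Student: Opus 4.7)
The first assertion is essentially free: for any $i\ne j$ we have $\beta'_j - \beta'_i = (\beta_j-\beta_i) + (j-i)\eps$, which is nonzero when $\eps$ is understood as a positive infinitesimal (or, concretely, when $\eps$ is a positive rational smaller than $|\beta_j-\beta_i|/n$ for every pair with $\beta_i\ne\beta_j$). So the plan is to spend essentially all the effort on the second assertion.

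For the second assertion, the plan is to show that the cost of \emph{any} tree $T$ under $\beta'$ decomposes cleanly into its cost under $\beta$ plus an $\eps$-multiple of an integer quantity depending only on the structure of $T$. Writing $d_i(T)$ for the depth of the leaf $\leafnode{K_i}$ in $T$ and $d'_j(T)$ for the depth of the leaf handling the gap $\alpha_j$, the definition of cost gives
\[
\cost_{\beta'}(T) \;=\; \sum_i \beta'_i\, d_i(T) + \sum_j \alpha_j\, d'_j(T) \;=\; \cost_\beta(T) + \eps\, S(T),
\]
where $S(T) := \sum_i i\,d_i(T)$ is a nonnegative integer. Note that the set of valid \TwCompTree{}s for $\Instance'$ coincides with that for $\Instance$, since $(\Keys,\Queries,\Comparisons,\alpha)$ is unchanged, so comparing two trees makes sense in both instances.

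Once this decomposition is in hand, the conclusion drops out of the componentwise ordering defined in items (1)--(3) of the excerpt. If $T$ is optimal for $\Instance'$ and $T^*$ is any other valid tree, then $\cost_{\beta'}(T)\le\cost_{\beta'}(T^*)$, which by rule~(3) means either $\cost_\beta(T)<\cost_\beta(T^*)$ or $\cost_\beta(T)=\cost_\beta(T^*)$ (with $S(T)\le S(T^*)$). In both cases $\cost_\beta(T)\le\cost_\beta(T^*)$, so $T$ is optimal for $\Instance$ as well. There is no real obstacle here: the only step that requires a moment of care is verifying that the $\eps$-coefficient in $\cost_{\beta'}(T)$ really is a pure function of the tree's structure (i.e., that the $\alpha_j$'s and the gap depths contribute nothing to it), which is immediate from $\beta'_j-\beta_j = j\eps$ and the unperturbed $\alpha$'s.
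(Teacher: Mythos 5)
Your proof is correct and follows essentially the same approach as the paper's: compare the real and infinitesimal parts of the cost under the lexicographic order and observe that the real part of the $\Instance'$-cost equals the $\Instance$-cost. You spell out the decomposition $\cost_{\beta'}(T)=\cost_\beta(T)+\eps S(T)$ explicitly, which the paper only asserts implicitly; otherwise the arguments coincide.
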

\begin{proof}
  Let $A$ be a tree that is optimal w.r.t.~$\Instance'$.
  Let $B$ be any other tree,
  and let the costs of $A$ and $B$ under $\Instance'$ be, respectively,
  $a_1 + a_2\eps$ and $b_1 + b_2\eps$. 
  Then their respective costs under $\Instance$ are $a_1$ and $b_1$.
  Since $A$ has minimum cost under $\Instance'$,
  $a_1 + a_2\eps \le b_1 + b_2\eps$. 
  That is, either $a_1 < b_1$, or $a_1 = b_1$ (and $a_2 \le b_2)$.
  Hence $a_1 \le b_1$: that is,
  $A$ costs no more than $B$ w.r.t.~$\Instance$.
  Hence $A$ is optimal w.r.t.~$\Instance$.
\end{proof}



Doing arithmetic this way increases running time by a constant factor.\footnote
{For an algorithm that works with linear (or $O(1)$-degree polynomial) functions of $\beta$.}
This completes the proof of Thm.~\ref{thm:general}.
The reduction can also be used to avoid the significant effort
that Anderson \etal~\cite{Anderson2002} devote
to non-distinct key probabilities.

For computing optimal \emph{binary split trees} for unrestricted queries,
the fastest known time is $O(n^5)$, due to~\cite{Hester1986}.
But~\cite{Hester1986} also gives an $O(n^4)$-time algorithm
for the case of distinct key probabilities.
\sout{With the above reduction, the latter algorithm gives $O(n^4)$ time for the general case,
  proving Thm.~\ref{thm:split}.}~\footnote
{ERRATUM (MARCH 2021): This reasoning is incorrect for binary split trees,
  because perturbing the key probabilities can change which key is the maximum-likelihood key,
  changing the space of valid search trees.}



\NOvspace{-0.1in}
\section{Proof of Thm.~\ref{thm:approximate} (additive-3 approximation algorithm)}%
\label{sec: approximations}

Fix any instance $\Instance=(\Keys,\Queries,\Comparisons,\alpha,\beta)$.
If $\Comparisons$ is $\{=\}$ then the optimal tree can be found in $O(n\log n)$ time,
so assume otherwise.  In particular, $<$ and/or $\le$ are in $\Comparisons$.
Assume that $<$ is in $\Comparisons$ (the other case is symmetric).

The entropy 
$H_{\Instance}= - \sum_i \beta_i \log_2 \beta_i - \sum_i \alpha_i \log_2 \alpha_i$
is  a lower bound on $\opt(\Instance)$.
For the case $\Keys = \Queries$ and $\Comparisons = \{<\}$,
Yeung's $O(n)$-time algorithm~\cite{Yeung1991} constructs a \TwCompTree 
that uses only $<$-comparisons whose cost is at most $H_{\Instance}+2-\beta_1-\beta_n$.
We reduce the general case to that one, adding roughly one extra comparison.


Construct 
$\Instance'=(\Keys'=\Keys,\,\Queries'=\Keys, \,\Comparisons'=\{<\},\,\alpha',\beta')$
where each $\alpha'_i=0$ and each $\beta' _i = \beta_i+ \alpha_i$
(except $\beta'_1 = \alpha_0 + \beta_1 + \alpha_1$).
Use Yeung's algorithm~\cite{Yeung1991} to construct tree $T'$ for $\Instance'$.
Tree $T'$ uses only the $<$ operator,
so any query $v\in\Queries$ that reaches a leaf
$\leafnode{K_i}$ in $T'$ must satisfy $K_i \le v < K_{i+1}$
(or $v < K_2$ if $i=1$).
To distinguish $K_i = v$ from $K_i < v < K_{i+1}$,
we need only add one additional comparison at each leaf
(except, if $i=1$, we need two).\footnote
{If it is possible to distinguish $v=K_i$ from $K_i < v < K_{i+1}$,
then $\Comparisons$ must have at least one operator other than $<$,
so we can add either $\compnode = {K_i}$ or $\compnode \le {K_i}$.}
By Yeung's guarantee, $T'$ costs at most $H_{\Instance'} + 2  -\beta'_1 -\beta'_n$. 
The modifications can be done so as to increase the cost by at most $1 + \alpha_0 + \alpha_1$,
so the final tree costs at most $H_{\Instance'} + 3$.
By standard properties of entropy, 
$H_{\Instance'} \le H_{\Instance} \le \opt(\Instance)$,
proving Thm.~\ref{thm:approximate}.



\NOvspace{-0.1in}
\section{Proof of Thm.~\ref{thm:flaw} (errors in work on binary split trees)}%
\label{sec: error}

\begin{figure}[t]\centering\NOvspace{-3pt}
  \noindent\includegraphics[trim=0 1 0 1, clip, width=0.9\textwidth]{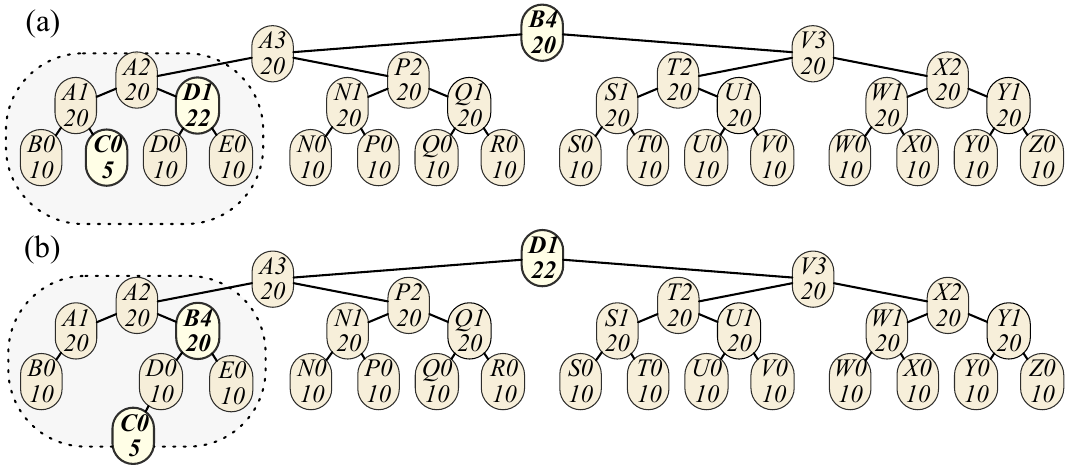}\NOvspace{-3pt}
  \caption{
    Two \GBSTs for an instance.
    Keys are ordered alphabetically ($A0 < A1 < A2 < A3 < B0  < \cdots$). 
    Each node shows its equality key and the frequency of that key;
    split keys are not shown.
    The algorithm of~\cite{StephenHuang1984} gives (a), of cost 1763, but (b) costs 1762.
  }\label{fig:error bad}\NOvspace{-0pt}
\end{figure}
A \emph{generalized binary split tree} (\GBST)
is a rooted binary tree where each node $N$
has an \emph{equality} key $\EQK N$ and a \emph{split} key $\SPLITK N$.
A search for query $\query\in\Queries$ starts at the root $r$.
If $\query = \EQK r$, the search halts.
Otherwise, the search recurses on the left subtree (if $\query < \SPLITK r$)
or the right subtree (if $\query\ge \SPLITK r$).
The \emph{cost} of the tree is the expected number of nodes (including, by convention, leaves)
visited for a random query $\query$.
Fig.~\ref{fig:error bad} shows two \GBSTs for a single instance.

To prove Thm.~\ref{thm:flaw}, we observe
that~\cite{StephenHuang1984}'s Lemma~4 and algorithm 
fail on the instance in Fig.~\ref{fig:error bad}.
There is a solution of cost only 1762  (in Fig.~\ref{fig:error bad}(b)),
but the algorithm gives cost 1763 for the instance  (as in Fig.~\ref{fig:error bad}(a)),
as can be verified by executing the Python code for the algorithm in Appendix~\ref{sec:code}.
The intuition is that the optimal substructure property fails for the subproblems 
defined by~\cite{StephenHuang1984}:
the circled subtree in (a) (with root $A2$) is cheaper
than the corresponding subtree in (b), but leads to larger global cost.
For more intuition and the full proof, see Appendix~\ref{app:flaw}.


\NOvspace{-0.1in}
\section{Proof of Thm.~\ref{thm:noequality} ($O(n\log n)$ time without equality)}%
\label{sec:nlogn}



Fix any \TwCompTree instance
$\Instance = (\Keys, \Queries, \Comparisons, \alpha, \beta)$ 
with $\Comparisons \subseteq\braced{ < , \le }$.
Let $n=|\Keys|$.
We show that,
in $O(n \log n)$ time,
one can compute an equivalent instance $\Instance' = (\Keys',\Queries',\Comparisons',\alpha',\beta')$
with $\Keys'=\Queries'$, $\Comparisons'=\{<\}$, and
$|\Keys'| \le 2n+1$.
(\emph{Equivalent} means that, given an optimal \TwCompTree $T'$ for $\Instance'$,
one can compute in $O(n\log n)$ time an optimal \TwCompTree $T$ for  $\Instance$.)
The idea is that, when $\Comparisons\subseteq\{<,\le\}$,
open intervals are functionally equivalent to keys.

Assume without loss of generality that $\Comparisons =\{<,\le\}$.
(Otherwise no correct tree exists unless $\Keys=\Queries$, and we are done.)
Assume without loss of generality that no two elements in $\Queries$
are equivalent (in that they relate to all keys in $\Keys$ in the same way;
otherwise, remove all but one query from each equivalence class).
Hence, at most one query lies between any two consecutive keys,
and $|\Queries|\le 2|\Keys|+1$.

Let instance $\Instance' = (\Keys',\Queries,\Comparisons',\alpha',\beta')$
be obtained by taking the key set $\Keys' =\Queries$ to be the key set,
but restricting comparisons to $\Comparisons'=\{<\}$
(and adjusting the probability distribution appropriately ---
take $\alpha' \equiv = 0$, take $\beta_i$ to be the probability associated with the $i$th query
--- the appropriate $\alpha_j$ or $\beta_j$).

Given any irreducible \TwCompTree $T$ for $\Instance$, 
one can construct a tree $T'$ for $\Instance'$ of the same cost as follows.
Replace each node $\compnode \le k$ with a node $\compnode < {q}$, 
where $q$ is the least query value larger than $k$
(there must be one, since $\compnode \le k$ is in $T$ and $T$ is irreducible).
Likewise, replace each node $\compnode < k$ with a node $\compnode < q$, 
where $q$ is the least query value greater than or equal to $k$
(there must be one, since $\compnode < k$ is in $T$ and $T$ is irreducible). 
$T'$ is correct because $T$ is.

Conversely, given any irreducible \TwCompTree $T'$ for $\Instance'$,
one can construct an equivalent \TwCompTree $T$ for $\Instance$ as follows.
Replace each node $N' = \compnode < q$ as follows.
If $q\in \Keys$, replace $N'$ by $\compnode < k$.
Otherwise, replace $N'$ by  $\compnode \le {k}$,
where key $k$ is the largest key less than $q$.
(There must be such a key $k$.  Node $\compnode < q$ is in $T'$ but $T'$ is irreducible,
so there is a query, and hence a key $k$, smaller than $q$.)
Since $T'$ correctly classifies each query in $\Queries$, so does $T$.

To finish, we note that the instance $\Instance'$ can be computed from $\Instance$
in $O(n\log n)$ time (by sorting the keys, under reasonable assumptions about $\Queries$),
and the second mapping (from $T'$ to $T$)  can be computed in $O(n\log n)$ time.
Since $\Instance'$ has $\Keys'=\Queries'$ and $\Comparisons= \{<\}$,
it is known~\cite{Knuth1971} to be equivalent to an instance of alphabetic encoding,
which can be solved in $O(n\log n)$ time~\cite{Hu1971,Garsia1977}.



\bibliographystyle{plainurl}
{\small
\bibliography{bib}
}

\newpage 
\section{Appendix}\label{sec: appendix}
\subsection{Python code for Thm.~\ref{thm:flaw} (\GBST algorithm of~\cite{StephenHuang1984})}%
\label{sec:code}
{\scriptsize
\begin{lstlisting} 
#!/usr/bin/env python3.4
import functools
memoize = functools.lru_cache(maxsize=None)

def huang1984(weights):
    "Returns cost as computed by Huang and Wong's GBST algorithm (1984)."

    n = len(weights)
    beta = {i+1 : weights[key] for i, key in enumerate(sorted(weights.keys()))}

    def is_legal(i, j, d): return 0 <= i <= j <= n and 0 <= d <= j - i

    @memoize
    def p_w_t(i, j, d):
        "Returns triple: (cost p[i,j,d], weight w[i,j,d], deleted keys for t[i,j,d])."

        interval = set(range(i+1, j+1))
        
        if d == j-i:                   # base case
            return (0, 0, interval)
        
        def candidates():              # Lemma 4 recurrence from Huang et al
            for k in interval:         #    k = index of split key
                for m in range(d+2):   #    m = num. deletions from left subtree
                    if is_legal(i, k-1, m) and is_legal(k-1, j, d-m+1):
                        cost_l, weight_l, deleted_l = p_w_t(i,  k-1,     m)
                        cost_r, weight_r, deleted_r = p_w_t(k-1,  j, d-m+1)
                        deleted = deleted_l .union( deleted_r )
                        x = min(deleted, key = lambda h : beta[h])
                        weight = beta[x] + weight_l + weight_r
                        cost = weight + cost_l + cost_r
                        yield cost, weight, deleted - set([x])

        return min(candidates())

    cost, weight, keys = p_w_t(0, n, 0)
    return cost

weights = dict(b4=20,
               a3=20, v3=20,
               a2=20, p2=20, t2=20, x2=20,
               a1=20, d1=22, n1=20, q1=20, s1=20, u1=20, w1=20, y1=20,
               b0=10, c0= 5, d0=10, e0=10, n0=10, p0=10, q0=10, r0=10,
               s0=10, t0=10, u0=10, v0=10, w0=10, x0=10, y0=10, z0=10)

assert huang1984(weights) == 1763  # Both assertions pass. The first is used in our Thm. 4.

weights['d1'] += 0.99              # Increasing a weight cannot decrease the optimal cost, but
assert huang1984(weights) <  1763  # in this case decreases the cost computed by the algorithm. 
\end{lstlisting}
}

\label{sec:appendix}

The extended abstract~\cite{ISAACPAPER} is essentially the body of this paper minus 
the remainder of this appendix.
The remainder of this appendix contains all proofs omitted from the extended abstract.






\subsection{Proof of Lemmas~\ref{obs:central 1}--\ref{obs:central 4} (in the proof of Spuler's conjecture)}%
\label{sec:obs:central}

We prove some slightly stronger lemmas that imply  Lemmas~\ref{obs:central 
1}--\ref{obs:central 4}.

Let $T$ be any irreducible, optimal \TwCompTree as in the proof of Lemma~\ref{Lemma:Spuler}.


\begin{lemma}[implies Lemma~\ref{obs:central 1}]\label{obs:app central 1}
Assume $T$ has a subtree as in Fig.\,\ref{fig:central 1}(a)
with nodes $\compnode = {K_a}$ and  $\compnode \prec {K_b}$.
{\textbfrm{(i)}} 
Replacing that subtree the one in Fig.\,\ref{fig:central 1}(b) (if $K_a\prec K_b$)
or the one in Fig.\,\ref{fig:central 1}(c) (if $K_a\not\prec K_b$)
preserves correctness.
{\textbfrm{(ii)}} If $K_a\prec K_b$, then $\weight(K_a) \ge \weight(T_1)$; 
otherwise $\weight(K_a) \ge \weight(T_0)$.
\end{lemma}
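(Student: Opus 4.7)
The plan is to handle both parts with a single local modification, and then invoke optimality of $T$ for part (ii).

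For part (i), I would verify correctness by a case analysis on the query $v$ reaching the subtree. In (a) every $v$ is first tested against $K_a$, and the inequality test $\compnode \prec {K_b}$ is performed only when $v \ne K_a$. In replacement (b), applied when $K_a \prec K_b$, the inequality is tested first. The key observation is that $K_a \prec K_b$ forces every query $v$ with $v \not\prec K_b$ to also satisfy $v \ne K_a$, so the original $\not\prec K_b$ subtree may be attached directly to that branch even without a preceding equality test. On the other branch ($v \prec K_b$, which now includes $K_a$), inserting a fresh $\compnode = {K_a}$ above the original $\prec K_b$ subtree restores the missing equality test, so every query still learns its full relation to $\Keys$. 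The symmetric replacement (c), applied when $K_a \not\prec K_b$, instead moves $\compnode = {K_a}$ into the $\not\prec K_b$ branch.

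For part (ii), I would compute the cost change induced by the replacement, using the fact that $\cost{T}$ equals the sum of $\weight(N)$ over internal nodes $N$. Let $B$ denote the weight of the branch of $\compnode \prec {K_b}$ that contains $K_a$, and let $F'$ denote the weight of the other branch. In (a) the two top nodes contribute $(\beta_a + B + F') + (B + F')$, because $\compnode = {K_a}$ catches every query while $\compnode \prec {K_b}$ is skipped by $v = K_a$. In the replacement the contributions become $(\beta_a + B + F') + (\beta_a + B)$, because $\compnode \prec {K_b}$ now catches every query but the new equality test sits only above the branch containing $K_a$. All other internal-node contributions are identical. Subtracting gives $\cost{T} - \cost{T^*} = F' - \beta_a$, and optimality of $T$ forces this to be $\le 0$. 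Hence $\weight(K_a) = \beta_a \ge F'$, which equals $\weight(T_1)$ when $K_a \prec K_b$ (the branch not containing $K_a$ being the fail branch) and $\weight(T_0)$ when $K_a \not\prec K_b$.

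The main obstacle is the bookkeeping in part (i): reordering the equality and inequality tests must not cause any query to miss a comparison needed to determine its relation to some key in $\Keys$. Once that is in place, part (ii) drops out of a short arithmetic calculation followed by an immediate appeal to the optimality of $T$.
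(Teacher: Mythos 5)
Your proof is correct and takes essentially the same approach as the paper's: swap the order of the two top comparisons, verify that each query still reaches the same leaf (using the observation that $K_a \prec K_b$ implies the $\not\prec K_b$ branch cannot receive $K_a$), and compare costs via the sum-of-internal-node-weights identity, with optimality of $T$ forcing the cost change $\weight(K_a) - F'$ to be nonnegative. The paper compresses this to a single sentence of case inspection plus a one-line cost computation, but the underlying argument is the same.
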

\begin{proof}
  Assume that $K_a\prec K_b$ (the other case is symmetric).
  By inspection of each case ($Q=K_a$ or $Q\ne K_a$),
  subtree (b) classifies each query $Q$ the same way subtree (a) does,
  so the modified tree is correct.  
  The modification changes the cost by $\weight(K_a) - \weight(T_1)$, 
  so (since $T$ has minimum cost) $\weight(K_a) \ge \weight(T_1)$.
\end{proof}

\begin{figure}[h]\centering\NOvspace{-5pt}
  \noindent\includegraphics[width=0.9\textwidth]{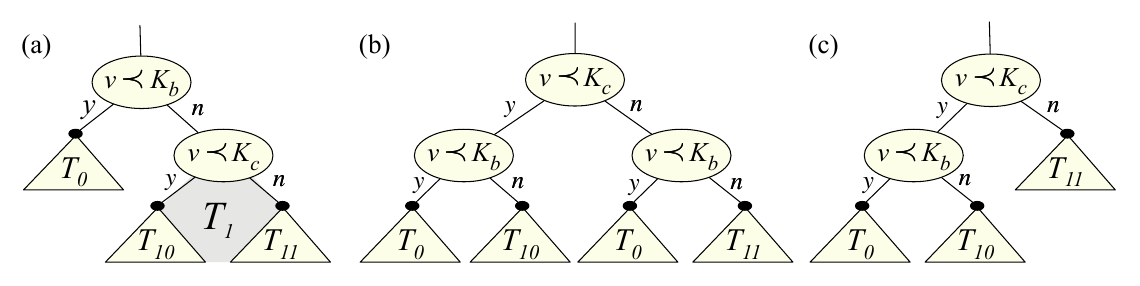}\NOvspace{-10pt}
  \caption{Lemma\,\ref{obs:app central 2} --- ``Rotating'' subtree (a) yields (c);
    the subtrees are interchangeable.}\label{fig:app central 2}\NOvspace{-10pt}
\end{figure}


\begin{lemma}[implies Lemma~\ref{obs:central 2}(i)]\label{obs:app central 2}
{\textbfrm{(i)}} If $T$ has either of the two subtrees in Fig.\,\ref{fig:app central 2}(a) or (c), 
then exchanging one for the other preserves correctness.
{\textbfrm{(ii)}} If $T$ has the subtree in Fig.\,\ref{fig:app central 2}(a), then
$\weight(T_0) \ge \weight(T_{11})$.
\end{lemma}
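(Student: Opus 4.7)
The plan is a local tree-rotation argument, analogous to a BST rotation applied to the two nested inequality nodes. In subtree (a), interpret $\compnode \prec {K_b}$ as the upper node of the rotation, with $T_0$ on its ``fail'' side and $\compnode \prec {K_c}$ on its ``pass'' side; the latter has $T_{10}$ on its fail side and $T_{11}$ on its pass side. The rotated subtree (c) raises $\compnode \prec {K_c}$ to the upper position, placing $T_{11}$ directly on its pass side, and puts $\compnode \prec {K_b}$ on its fail side with $T_{10}$ on pass and $T_0$ on fail.

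For part (i), I would argue correctness by partitioning. The two comparisons $\prec K_b$ and $\prec K_c$ together split the incoming query set into at most three classes, determined entirely by the query's relation to the two thresholds: those passing $\prec K_c$, those failing $\prec K_c$ but passing $\prec K_b$, and those failing $\prec K_b$. For (a) to be a correct subtree of the original optimal $T$ at all, $K_b$ and $K_c$ must lie in a consistent order that makes this a genuine three-way partition (otherwise one of $T_{10}$, $T_{11}$, $T_0$ would receive no queries and $T$ would not be irreducible/optimal). A short case check on the two operators (each $<$ or $\le$) and on whether $K_c < K_b$, $K_c = K_b$, or $K_c > K_b$ shows that the queries in the three classes are routed to $T_{11}$, $T_{10}$, and $T_0$ respectively in both (a) and (c). Hence (a) and (c) deliver every query to the same leaf subtree, so they are freely interchangeable without affecting correctness.

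For part (ii), I would read off the cost difference directly. Let $T_c$ be the result of replacing (a) by (c) inside the surrounding tree $T$. Relative to the root of the rotated pair, the depths of the three relevant subtrees change as follows: $T_0$ goes from depth $1$ in (a) to depth $2$ in (c), $T_{11}$ goes from depth $2$ to depth $1$, and $T_{10}$ stays at depth $2$ in both. Hence the cost of $T_c$ minus the cost of $T$ equals $\weight(T_0)-\weight(T_{11})$. Since $T_c$ is a correct \TwCompTree by part (i), and $T$ is optimal, the cost of $T$ is at most that of $T_c$, so $\weight(T_0) \ge \weight(T_{11})$.

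The main obstacle is the case analysis in (i): the inequality operators attached to the $K_b$- and $K_c$-nodes can in principle differ, so one must carefully track how $<$ versus $\le$ interact with the relative order of $K_b$ and $K_c$. However, the combinations in which (a) is not already a legal correct subtree of $T$ (those that would force some $T_{ij}$ to be empty or would break irreducibility) are automatically ruled out, leaving only a small finite family of consistent cases, each of which is routine to verify.
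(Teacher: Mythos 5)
Your proof is correct and takes essentially the same approach as the paper's: both justify the rotation by invoking irreducibility of $T$ to rule out the one query class that the rotation would misroute, and then read the cost change $\weight(T_0)-\weight(T_{11})$ off from the one-level swap, concluding via optimality of $T$. The paper carries out the irreducibility step as an explicit interval-and-complement argument (passing through an intermediate subtree (b) with a redundant comparison), whereas you propose an equivalent finite case check on the two operators and the relative order of $K_b$ and $K_c$; the underlying reasoning is the same.
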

\begin{proof}\emph{Part (i).}
  The transformation from (a) to (c) is a standard rotation operation on
  binary search trees, but, since the comparison operators can be either $<$ or $\le$ in our context,
  we verify correctness carefully.

  By inspection, replacing subtree (a) by subtree (b) (in Fig.\,\ref{fig:app central 2})
  gives a tree that classifies all queries as $T$ does, and so is correct.

  Next we observe that, in subtree (b), replacing the right subtree
  by just $T_{11}$ (to obtain subtree (c)), maintains correctness.
  Indeed, since $T$ is irreducible,
  replacing (in (a)) the subtree $T_1$ by just $T_{11}$ would give an incorrect tree.
  Equivalently, $\exists Q.~ Q\not\prec K_b \,\wedge\, Q\prec K_c$.
  Equivalently,
  the right-bounded interval 
  $\{Q\in \Reals : Q\prec K_c\}$
  overlaps
  the left-bounded interval
  $\{Q\in \Reals : Q\not\prec K_b\}$.
  Equivalently,
  the complements of these intervals,
  namely
  $\{Q\in \Reals : Q\not\prec K_c\}$
  and
  $\{Q\in \Reals : Q\prec K_b\}$,
  are disjoint.
  Equivalently,
  $\forall Q.~Q\not\prec K_c \rightarrow Q \not\prec K_b$.
  Hence,
  replacing the right subtree of (b) by $T_{11}$ (yielding (c))
  maintains correctness.

  In sum, replacing subtree (a) by subtree (c) maintains correctness.
  This shows part (i).
  This replacement changes the cost by $\weight(T_0) - \weight(T_{11})$,
  so $\weight(T_0) \ge \weight(T_{11})$.  This proves part (iii).  
  The proof of (ii) is symmetric to the proof of (i).
\end{proof} 

\begin{figure}[h]\centering\NOvspace{0pt}
  \noindent\includegraphics[width=0.82\textwidth]{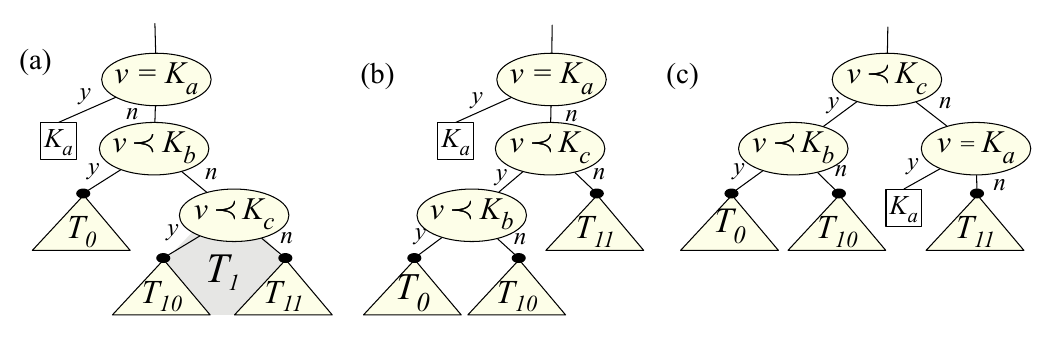}\NOvspace{-10pt}
  \caption{Lemma\,\ref{obs:app central 3} --- Subtrees (a) and (c) are interchangeable if $K_a \not\prec K_c$.}\label{fig:app central 3}\NOvspace{-5pt}
\end{figure}


\NOvspace{-5pt}
\begin{lemma}[implies Lemma~\ref{obs:central 3}(ii)]\label{obs:app central 3}
If $T$ has a subtree as in Fig.\,\ref{fig:app central 3}(a), and $K_a \not\prec K_c$, then
{\textbfrm{(i)}} replacing the subtree by Fig.\,\ref{fig:app central 3}(c) preserves correctness, and
{\textbfrm{(ii)}} $\weight(K_a) \ge \weight(T_1)$.
\end{lemma}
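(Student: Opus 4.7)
My plan is to follow the same template used in Lemmas~\ref{obs:app central 1} and \ref{obs:app central 2}: exhibit a local modification that transforms subtree (a) into subtree (c) without breaking correctness, and then read the claimed weight inequality off the resulting cost change using the fact that $T$ is globally optimal.

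For part (i), the natural candidate for (c) is the ``partial rotation'' that pulls $\compnode \prec {K_c}$ up to the root of $T'$ while pushing $\compnode = {K_a}$ down to sit just above $T_{11}$. Explicitly, (c) would have root $\compnode \prec {K_c}$; its yes-child is $\compnode \prec {K_b}$, beneath which $T_0$ and $T_{10}$ hang; and its no-child is $\compnode = {K_a}$, whose yes-child is $\leafnode{K_a}$ and whose no-child is $T_{11}$. I would verify correctness by a short case split on a query $Q\in\queries{T'}$: the query $Q=K_a$ ends at $\leafnode{K_a}$ in both subtrees (this uses $K_a\not\prec K_c$ together with $K_a\not\prec K_b$ from Case~2); queries with $Q\not\prec K_b \wedge Q\prec K_c$ end in $T_{10}$ in both; queries with $Q\not\prec K_c$ and $Q\ne K_a$ end in $T_{11}$ in both; and queries with $Q\prec K_b$ end in $T_0$ in both. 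The only nontrivial point is the last: for it I need the implication $Q\prec K_b \Rightarrow Q\prec K_c$, which is precisely the contrapositive of the fact $\forall Q.~Q\not\prec K_c \Rightarrow Q\not\prec K_b$ that was already derived from $T$'s irreducibility inside the proof of Lemma~\ref{obs:app central 2}, so I can simply invoke it.

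For part (ii), the cost of each subtree equals the sum of the weights of its nodes. Expanding the costs of (a) and (c), cancelling the unchanged contributions from the bodies of $T_0$, $T_{10}$, and $T_{11}$, and using $\weight(T_1) = \weight(T_{10}) + \weight(T_{11})$ together with $\weight(T') = \beta_a + \weight(T_0) + \weight(T_1)$, I expect the difference to collapse to (cost of (c)) $-$ (cost of (a)) $= \beta_a - \weight(T_1)$. The optimality of $T$ forces this to be nonnegative, yielding the claimed inequality $\weight(K_a) \ge \weight(T_1)$.

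The main obstacle is picking the replacement (c) that yields exactly the coefficient $\weight(T_1)$ on the right-hand side: a more naive transformation, such as merely relocating $\compnode = {K_a}$ to sit above $T_{11}$ without rotating $\compnode \prec {K_c}$ upward, only reproduces the already-known bound $\weight(K_a)\ge\weight(T_0)$ (Fact~A of the main proof), and a poorly chosen rotation direction leaves a stray $\weight(T_0)$ term and fails to isolate $\weight(T_1)$. A secondary point is that $\prec$ may be either $<$ or $\le$, so the correctness argument should be phrased in terms of the half-line query-sets $\{Q:Q\prec K_b\}$ and $\{Q:Q\prec K_c\}$ (as in the proof of Lemma~\ref{obs:app central 2}), rather than in terms of a literal ordering of $K_b$ and $K_c$, so that it applies uniformly across both operators.
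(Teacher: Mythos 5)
Your proposed replacement (c) --- root $\compnode \prec {K_c}$, with yes-child $\compnode \prec {K_b}$ (over $T_0$ and $T_{10}$) and no-child $\compnode = {K_a}$ (over $\leafnode{K_a}$ and $T_{11}$) --- is exactly the subtree the paper uses, and your cost difference $\beta_a - \weight(T_1)$ matches the paper's. The paper establishes correctness by composing Lemma~\ref{obs:app central 2}(i) and then Lemma~\ref{obs:app central 1}(i) (the latter applied with $K_c$ in the role of $K_b$), whereas you verify it by a direct query-by-query case split reusing the irreducibility implication $Q\prec K_b \Rightarrow Q\prec K_c$ from Lemma~\ref{obs:app central 2}'s proof; these are equivalent, so your proposal is essentially the paper's argument.
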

\begin{proof}
  (i) Assume $T$ has the subtree in  Fig.\,\ref{fig:app central 2}(a) (the other case is symmetric).
  By Lemma~\ref{obs:app central 2}(i) (applied to the subtree of (a) with root $\compnode \prec {K_b}$),
  replacing subtree (a) by subtree (b) gives a correct tree.
  Then, by Lemma~\ref{obs:app central 1}(i) 
  (applied to subtree (b), but note that node $\compnode \prec {K_c}$ in (b)
  takes the role of node $\compnode \prec {K_b}$ in Fig.\,\ref{fig:central 1}(a)!)
  replacing (b) by (c) gives a correct tree.
  This proves part (i).
  Part (ii) follows because 
  replacing (a) by (c) changes the cost of $T$ by $\weight(K_a) - \weight(T_1)$, 
  and $T$ has minimum cost, so $\weight(K_a) \ge \weight(T_1)$.
\end{proof}

\begin{figure}[h]\centering\NOvspace{-1em}
  \noindent\includegraphics[width=0.9\textwidth]{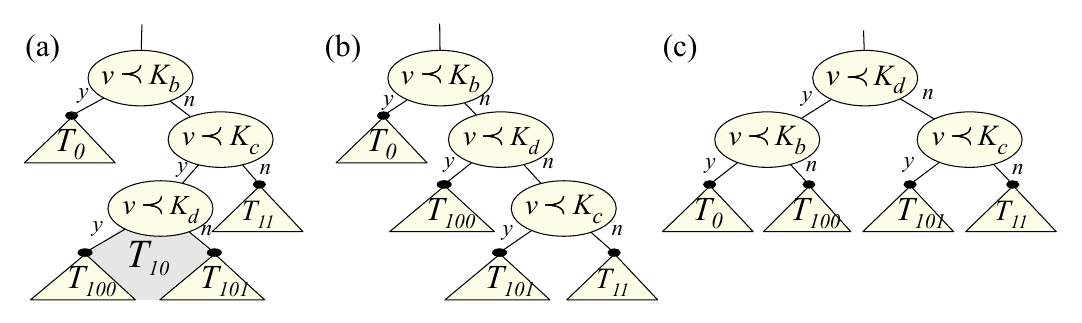}\NOvspace{-10pt}
  \caption{Lemma\,\ref{obs:app central 4} --- Subtrees (a) and (c) are interchangeable.}\label{fig:app central 4}\NOvspace{-10pt}
\end{figure}


\begin{lemma}[implies Lemma~\ref{obs:central 4}]\label{obs:app central 4}
  If $T$ has a subtree as in Fig.\,\ref{fig:app central 4}(a), then
  {\textbfrm{(i)}} replacing that subtree by the one in Fig.\,\ref{fig:app central 4}(c) preserves correctness, and
  {\textbfrm{(ii)}} $\weight(T_0) \ge \weight(T_{10})$.
\end{lemma}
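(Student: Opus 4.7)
The plan is to mirror the proof of Lemma~\ref{obs:app central 3} almost verbatim: construct subtree (c) from subtree (a) via a two-step sequence of local, correctness-preserving transformations justified by Lemmas~\ref{obs:app central 1} and \ref{obs:app central 2}, and then derive (ii) from the optimality of $T$.

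First I would apply Lemma~\ref{obs:app central 2}(i) inside subtree (a). In Case~2.2 of the proof of Lemma~\ref{Lemma:Spuler} we are guaranteed that $K_a \prec K_c$, that $K_z \prec K_c$ (so the grandchild $\compnode \prec K_c$ does not separate $K_a$ from $K_z$), and that the root of $T_{10}$ itself does an inequality comparison; call that root $\compnode \prec K_d$. These are precisely the structural hypotheses needed for the three-node configuration in Fig.~\ref{fig:app central 2}(a) to appear inside (a) at the pair $\compnode \prec K_c$ / $\compnode \prec K_d$. Rotating this pair via Lemma~\ref{obs:app central 2}(i) yields an intermediate subtree (b) whose root is still $\compnode \prec K_b$ but in which the positions of $T_0$ and the relevant descendant of $T_{10}$ have been interchanged.

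Second I would apply Lemma~\ref{obs:app central 1}(i) to (b). After the rotation, (b) contains a two-node configuration matching Fig.~\ref{fig:central 1}(a) with $\compnode \prec K_b$ sitting above an inequality node whose subtrees now include $T_0$; the relation $K_a \not\prec K_b$ (inherited from Case~2, Fact~A) tells us which side of Lemma~\ref{obs:app central 1}(i) to invoke. Collapsing this configuration eliminates one copy of $T_0$ (or the redundant branch introduced by the rotation) and produces exactly subtree (c). Combining the two steps gives part (i).

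For part (ii), observe that the composite transformation leaves every subtree other than $T_0$ and $T_{10}$ either in place or shifted by a cancelling amount: the two rearrangements are each shallow and the depth changes of the other subtrees telescope to zero. The net change in the cost of $T$ therefore equals $\weight(T_0)-\weight(T_{10})$ (up to a positive multiplicity, as these subtrees reside at depth $\ge 1$). Since $T$ is optimal and the resulting tree is correct, this change is nonnegative, giving $\weight(T_0)\ge \weight(T_{10})$.

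The main obstacle is checking the preconditions for the rotation cleanly: one must verify, using irreducibility of $T$ together with the Case~2.2 facts $K_a \prec K_c$ and $K_z \prec K_c$, that the implication ``$Q \not\prec K_c \Rightarrow Q \not\prec K_d$'' (or its symmetric form) required to justify Lemma~\ref{obs:app central 2}(i)'s rotation holds in the situation at hand, so that the intermediate (b) is genuinely equivalent to (a). Once that verification is in place the rest of the argument is routine.
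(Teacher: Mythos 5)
Your high-level plan — two local correctness-preserving transformations followed by an optimality argument giving $\weight(T_0)-\weight(T_{10})\ge 0$ — is the right shape, and your first step (a rotation at the $\compnode \prec {K_c}$ / root-of-$T_{10}$ pair via Lemma~\ref{obs:app central 2}(i)) matches what the paper does. But your second step is wrong, and the reason is precisely that Lemma~\ref{obs:app central 4} is \emph{not} a verbatim mirror of Lemma~\ref{obs:app central 3}.

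Lemma~\ref{obs:app central 1}(i), which you invoke for the second step, transforms a configuration rooted at an \emph{equality} node $\compnode = {K_a}$ sitting above an inequality node (Fig.~\ref{fig:central 1}(a)). That is why it is usable in the proof of Lemma~\ref{obs:app central 3}: there the subtree of interest is rooted at $\compnode = {K_a}$. In Lemma~\ref{obs:app central 4}, by contrast, the subtree in Fig.~\ref{fig:app central 4}(a) is rooted at the \emph{inequality} node $\compnode \prec {K_b}$ (it is the subtree two levels below $\compnode = {K_a}$), so no equality node appears anywhere in the configuration being transformed. You even describe the intermediate configuration as ``$\compnode \prec K_b$ sitting above an inequality node'' — which is exactly the shape handled by Lemma~\ref{obs:app central 2}(i), not Lemma~\ref{obs:app central 1}(i). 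The paper's actual second step is a \emph{second} application of Lemma~\ref{obs:app central 2}(i), now at $\compnode \prec {K_b}$. Relatedly, your language about ``collapsing'' the configuration and ``eliminat[ing] one copy of $T_0$ (or the redundant branch introduced by the rotation)'' indicates a misconception: rotations rearrange subtrees bijectively and introduce no duplicates, so there is nothing to collapse or eliminate; the second rotation simply moves $T_0$ down one level while bringing (what was) $T_{10}$ up one level. Once the two rotations are stated correctly, your reasoning for part (ii) — the depths of the other subtrees telescope, leaving a net cost change of exactly $\weight(T_0)-\weight(T_{10})$, which must be nonnegative by optimality of $T$ — is fine.
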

\begin{proof}
  Applying Lemma~\ref{obs:app central 2}(i) to the subtree of (a) with root $\compnode \prec {K_c}$,
  replacing subtree (a) by subtree (b) gives a correct tree.\footnote
  {Technically, to apply Lemma~\ref{obs:app central 2}(i), we need (b) to be correct and \emph{irreducible}.
  The overall argument remains valid though as long as the tree from Fig.\,\ref{fig:app central 4}(a) is optimal.}
  Then, applying Lemma~\ref{obs:app central 2}(i) to the subtree of (b) with root $\compnode \prec {K_b}$,
  replacing subtree (b) by subtree (c) gives a correct tree.
  This shows part (i).
  Part (ii) follows, because replacing (a) by (c) changes the cost of $T$ by $\weight(T_0) - \weight(T_{10})$,
  so $\weight(T_0) \ge \weight(T_{10})$.
\end{proof}

\subsection{Proof of Thm.~\ref{thm:flaw} --- Huang and Wong's error }\label{app:flaw}%
%
\label{app: error}

A \emph{generalized binary split tree} (\GBST)
is a rooted binary tree where each node $N$
has an \emph{equality} key $\EQK N$ and a \emph{split} key $\SPLITK N$.
A search for query $\query\in\Queries$ starts at the root $r$.
If $\query = \EQK r$, the search halts.
Otherwise, the search recurses on the left subtree (if $\query < \SPLITK r$)
or the right subtree (if $\query\ge \SPLITK r$).
The \emph{cost} of the tree is the expected number of nodes (including, by convention, leaves)
visited for a random query $\query$.

\begin{figure}[h]\centering\NOvspace{-15pt}
  \noindent\includegraphics[height=0.9in]{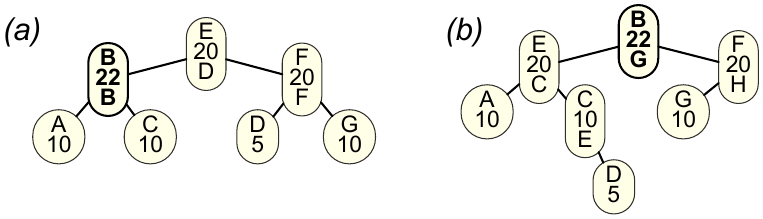}\NOvspace{-10pt}
  \caption{
    In a \GBST, each node has equality key, frequency, and (if internal) split key.
  }\label{fig:error huang}\NOvspace{-6pt}
\end{figure}


\newcommand{\KE}[1]{{\footnotesize\sf #1}}

Huang and Wong demonstrate that equality keys in optimal \GBSTs do not have the 
maximum-likelihood property~\cite{StephenHuang1984}.
Fig.~\ref{fig:error huang} shows their counterexample:
in the optimal \GBST (a), the root equality 
key is \KE{E} (frequency 20), not \KE{B} (frequency 22).
The cheapest tree with \KE{B} at the root is (b), and is more expensive.
Having \KE{B} at the root increases the cost because then the
other two high-frequency keys \KE{E} and \KE{F} have to be the children,
so the split key of the root has to split \KE{E} and \KE{F},
and low-frequency keys \KE{A}, \KE{C}, and \KE{D} all must be in the left subtree.

Following~\cite{StephenHuang1984}, restrict to successful queries ($\Keys=\Queries$).
Fix any instance $\Instance=(\Keys,\beta)$.
For any query interval $I=\{K_i,K_{i+1},\ldots,K_j\}$
and any subset $D\subseteq I$ of ``deleted'' keys,
let $\opt(I,D)$ denote the minimum cost of any \GBST
that handles the keys in $I\setminus D$.
This recurrence follows directly from the definition of \GBSTs:

\newcommand{\KEQ}{e}
\newcommand{\KLT}{s}

\begin{lemma}\label{obs:GBST recurrence}
  For any query set $I\setminus D$
  not handled by a single-node tree,  
  \[
  \opt(I,D) = 
  \weight(I\setminus D)+
  \displaystyle\min_{\KEQ, \KLT\in\Keys} 
  \,\opt(I_{{\mathsmaller <} \KLT}, \,D_{\KEQ} \cap I_{{\mathsmaller <} \KLT}) 
  + \opt(I_{{\mathsmaller \ge} \KLT},\, D_{\KEQ} \cap I_{{\mathsmaller \ge} \KLT})
  \]
  where $D_{\KEQ} = D \setminus \{\KEQ\}$,
  and $I_{{\mathsmaller <} \KLT} = \{\query\in I : \query < \KLT\}$
  and $I_{{\mathsmaller \ge} \KLT} = \{\query\in I : \query \ge \KLT\}$.
\end{lemma}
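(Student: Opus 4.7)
The plan is to verify the recurrence by the standard dynamic-programming argument for search trees, proving $\opt(I,D)\le{}$RHS and $\opt(I,D)\ge{}$RHS separately from the definition of a \GBST.

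For the upper bound, I would fix any pair $(e,s)\in\Keys\times\Keys$ and exhibit a \GBST of cost at most $\weight(I\setminus D)+\opt(I_{<s},D_e\cap I_{<s})+\opt(I_{\ge s},D_e\cap I_{\ge s})$: namely, a tree whose root $r$ has $\EQK{r}=e$ and $\SPLITK{r}=s$, and whose left and right subtrees are optimal \GBSTs for the two indicated subproblems. A three-case analysis on any query $\query\in I\setminus D$ (either $\query=e$ and the search halts at $r$; or $\query<s$ and the search descends into the left subtree; or $\query\ge s$ and it descends into the right subtree) shows that the tree correctly resolves every query in $I\setminus D$. The cost decomposes additively into a root contribution $\weight(I\setminus D)$ plus the two subtree costs, because every query in $I\setminus D$ visits the root and then (if not caught there) continues into exactly one of the two subtrees. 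Taking the minimum over $(e,s)$ gives the upper bound.

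For the lower bound, I would start from an optimal \GBST $T^*$ attaining $\opt(I,D)$, read off its root's keys $e^*=\EQK{r^*}$ and $s^*=\SPLITK{r^*}$, and decompose $\cost{T^*}=\weight(I\setminus D)+\cost{T^*_L}+\cost{T^*_R}$, where $T^*_L,T^*_R$ are the two subtrees. Each subtree is itself a \GBST correctly resolving the corresponding subproblem's queries, so its cost is at least the corresponding $\opt$ value. Summing yields that $\opt(I,D)$ is at least the $(e^*,s^*)$-summand of the RHS, hence at least the minimum.

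The main obstacle is the set-theoretic bookkeeping for the lower bound: verifying that $T^*_L$ is a feasible competitor for $\opt(I_{<s^*},D_{e^*}\cap I_{<s^*})$, i.e., that $T^*_L$ correctly resolves every query in the subproblem's handled set $I_{<s^*}\setminus D_{e^*}$, and analogously for the right subtree. Unwinding $D_{e^*}=D\setminus\{e^*\}$ amounts to tracking how the choice of $e^*$ at the root shifts the set of keys each subtree is responsible for. Once this correspondence is pinned down, the two inequalities combine to give the recurrence.
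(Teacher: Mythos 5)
Your two-sided decomposition is the right (and really the only natural) strategy here; the paper itself offers no proof beyond asserting the recurrence ``follows directly from the definition of \GBSTs.'' But the ``set-theoretic bookkeeping'' you flag as the main obstacle and defer is where the actual content lives: carrying it out shows that the recurrence as literally stated, with $D_e = D\setminus\{e\}$, cannot hold, and that the deleted set must instead be \emph{augmented} by the root's equality key, that is, $D_e = D\cup\{e\}$. Trace a query $\query\in I\setminus D$ past a root with equality key $e$ and split key $s$: if $\query=e$ it halts at the root, and otherwise it descends to the subtree determined by $s$. Hence the left subtree is responsible for exactly $(I\setminus D)\cap I_{<s}\setminus\{e\} = I_{<s}\setminus\bigl((D\cup\{e\})\cap I_{<s}\bigr)$, and symmetrically on the right.

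With $D\setminus\{e\}$, the lower-bound half of your argument fails at exactly the step you deferred: when $e^*\in I_{<s^*}$, the left subtree $T^*_L$ of an optimal tree never sees the query $e^*$ and need not resolve it, so $T^*_L$ is not a feasible competitor for $\opt\bigl(I_{<s^*},\,(D\setminus\{e^*\})\cap I_{<s^*}\bigr)$, whose handled query set still contains $e^*$. The recurrence as written is in fact strictly false: with two keys of probability $1/2$ each, $I=\{K_1,K_2\}$, $D=\emptyset$, one computes $\opt(I,\emptyset)=3/2$, while every non-degenerate summand of the right-hand side under $D\setminus\{e\}$ is at least $2$. Once $\setminus$ is replaced by $\cup$, both halves of your argument close exactly as you sketch --- in the upper bound the constructed tree's cost then \emph{equals} the $(e,s)$-summand because neither subtree's subproblem includes $e$, and in the lower bound each $T^*_i$ handles precisely the corresponding subproblem's query set. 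So do pin down that correspondence explicitly rather than treating it as routine; it is the whole substance of the lemma, and the statement you were handed has an error in precisely that spot.
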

The goal is to compute $\opt(\Keys,\emptyset)$.
Using the recurrence above, exponentially many subsets $D$ arise.
This motivates the following lemma.
For any node $N$ in an optimal \GBST,
define $N$'s \emph{key interval}, $I_N$, and \emph{deleted-key set}, $D_N$,
according to the recurrence in the natural way.
Then the set $\queries N$ of queries reaching $N$ is $I_N\setminus D_N$,
and $D_N$ contains those keys in $I_N$ that are in equality tests above $N$,
and $I_N$ contains the key values that, if searched for in $T$ with the equality tests removed,
would reach node $N$.

\begin{lemma}[{\cite[Lemma~2]{StephenHuang1984}}]\label{obs:huang}
  For any node $N$ in an optimal \GBST,
  $N$'s equality key
  is a least-frequent key among those in $I_N$ that aren't equality keys
  in any of $N$'s subtrees:
  if $e_N = K_i$, then $\beta_i = \min \{\beta_j : K_j \in D_N\}$.
\end{lemma}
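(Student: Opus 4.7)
The plan is a local exchange argument. Suppose for contradiction that $T$ is an optimal \GBST containing a node $N$ with $\EQK{N} = K_i$ and some $K_j \in D_N$ satisfying $\beta_j < \beta_i$. By the definition of $D_N$, the key $K_j$ is the equality key of some strict ancestor $M$ of $N$, i.e., $\EQK{M} = K_j$. I would construct $T'$ from $T$ by swapping these two equality keys: set $\EQK{M} := K_i$ and $\EQK{N} := K_j$, leaving every split key and the overall tree shape unchanged.

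The first step is to verify that $T'$ is a correct \GBST handling the same set of queries. Because $K_i = \EQK{N} \in I_N \subseteq I_M$, the inequality comparisons along the root-to-$M$ path send $\query = K_i$ to $M$ in $T'$ just as in $T$, where it is now caught by the new equality test. For $\query = K_j$: the inclusion $K_j \in D_N \subseteq I_N$ says the split keys along the root-to-$N$ path route $K_j$ all the way to $N$; and because $\EQK{M} = K_j$ in $T$, the key $K_j$ is not the equality key at any other node of $T$ (each key appears as an equality key at most once in a valid \GBST), so no node strictly between $M$ and $N$ catches $K_j$. Hence in $T'$, the query $\query = K_j$ passes the new test at $M$, follows the unchanged split comparisons down to $N$, and is caught by the new equality test there. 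All other queries follow identical root-to-leaf paths in $T$ and $T'$.

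Finally, I compute the cost change. Only the contributions of $\query = K_i$ and $\query = K_j$ change depth. Writing $d_M$ and $d_N$ for the depths of $M$ and $N$ in $T$, the query $K_i$ moves from depth $d_N$ to $d_M$ and $K_j$ moves in the opposite direction, giving a net change in expected cost of $(\beta_j - \beta_i)(d_N - d_M)$. This is strictly negative since $N$ is a proper descendant of $M$ (so $d_N > d_M$) and $\beta_j < \beta_i$ by hypothesis, contradicting the optimality of $T$. Therefore $\beta_i \le \beta_j$ for every $K_j \in D_N$, giving the claimed minimality of $\EQK{N}$. The main point to get right---and the only real obstacle---is the correctness check for the swap: confirming that $K_j$ is not intercepted by any intermediate equality test between $M$ and $N$ and that the unchanged inequality structure still routes $K_j$ to $N$. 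Both reduce to unpacking the definitions of $I_N$ and $D_N$ together with the ``each equality key appears at most once'' invariant of valid \GBSTs.
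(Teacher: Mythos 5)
Your exchange argument is exactly the one the paper invokes: the paper's entire justification is the single remark that ``the proof is the same exchange argument that shows our Assumption~1(ii),'' namely swapping the two equality keys along an ancestor--descendant chain while leaving all split keys fixed, observing that correctness is preserved because both keys lie in $I_N$ and so are routed identically by the unchanged inequality tests, and computing the cost change as $(\beta_j-\beta_i)(d_N-d_M)$. Your write-up simply spells out the details that the paper leaves implicit. One small caveat worth flagging: the step ``each key appears as an equality key at most once in a valid \GBST'' is not literally forced by the \GBST definition (a dead equality test at a node no query reaches, or that is shadowed by an earlier catch, is permitted and cost-neutral); but this does not break your argument, since if $K_j$ were caught at some intermediate node $M'$ with $d_M < d_{M'} \le d_N$ after the swap, the cost change $\beta_j(d_{M'}-d_M)-\beta_i(d_N-d_M)$ is still strictly negative whenever $\beta_j<\beta_i$, so the contradiction goes through either way. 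It is cleanest to take $M$ to be the node where the search for $K_j$ is actually caught in $T$ and argue as above, rather than appealing to a uniqueness invariant.
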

The proof is the same exchange argument
that shows our Assumption~\ref{ass:central }(ii).

{}\cite{StephenHuang1984} claims (incorrectly) that, by Lemma~\ref{obs:huang},
the desired value $\opt(\Queries,\emptyset)$
can be computed as follows.
For any key interval $I = \{K_{i+1},\ldots,K_j\}$ and $d\le n$, let
\begin{equation}\label{eq:huang p}
  p[i,j,d] = \min \{ \opt(I\setminus D) : D\subseteq I,\, |D|=d\}
\end{equation}
be the minimum cost of any \GBST for any query set $I\setminus D$
consisting of $I$ minus any $d$ deleted keys. 
Let $t[i,j,d]$ be a corresponding subtree of cost $p[i,j,d]$,
and let $w[i,j,d]$ be the weight of the root of $t[i,j,d]$.

Their algorithm uses the following (incorrect) recurrence (their Lemma~4):
\renewcommand{\KEQ}{\tilde e}
\begin{quote}\small\em
  \hspace*{1em}{$p[i,j,d] = \min ( w[i,j,d] + p[i,k-1,m] + p[k-1,j,d-m-1] )$}
  \smallskip 

  where the minimum is taken over all legal combinations of $k$'s and $m$'s
[and]
\smallskip 

\hspace*{1em}{$w[i,j,d] = w[i,k-1,m]+w[k-1,j,d-m-1] + \beta_x$}
\smallskip 

where $x$ is the index of the key of minimum frequency among those in range $\{K_{i+1},\ldots,K_j\}$
but outside $t[i,k-1,m]$ and $t[k-1,j,d-m+1]$\ldots''
\end{quote}

Next we describe their error.  
Recall that $p[i,j,d]$ chooses a subtree of \emph{minimum cost} (among trees with any $d$ keys deleted).
But this choice might not lead to minimum overall cost!
The reason is that the subtree's cost
does not suffice to determine the contribution to the overall cost:
the \emph{weight} of the subtree,
and the weights of the deleted keys and their eventual locations,
also matter.

\begin{figure}[h]\centering\NOvspace{-4pt}
  \noindent\includegraphics[trim=0 1 0 1, clip, height=1.05in]{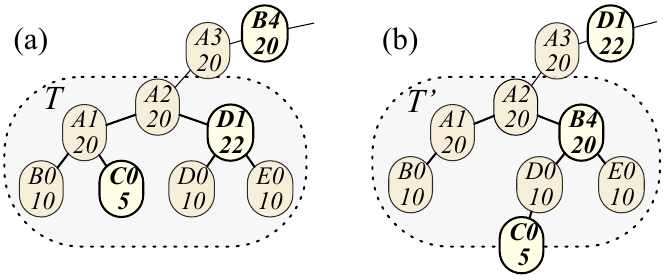}\NOvspace{-3pt}
  \caption{
    Trees $T$, $T'$ for 9-key interval $I$ with $d=2$. (Split keys not shown.)
  }\label{fig:error small}\NOvspace{-10pt}
\end{figure}

For example, consider $p[i,j,d]$ for the subproblem where $d=2$ 
and interval $I$ consists of the nine keys $I=\{A1,A2,A3,B0,B4,C0,D0,D1,E0\}$,
with the following weights:

\begin{quote}\centering\small

\begin{tabular}{|r|c|c|c|c|c|c|c|c|c|}\hline
	key:    & $A1$ &$A2$ & $A3$ & $B0$ & $B4$ & $C0$ & $D0$ & $D1$ & $E0$ \\ \hline
	weight: & 20 & 20 & 20 & 10 & 20 & 5 & 10 & 22 & 10 \\ \hline
\end{tabular}

\end{quote}

Fig.~\ref{fig:error small} shows two 7-node subtrees (circled and shaded), 
called $T$ and $T'$, involving these keys. These subtrees will be used in our counter-example, described below.
(The split key of each node is not shown in the diagram.)

Partition the set of possible trees $t[i,j,d]$ into two classes:
(i) those that contain $D1$ 
and (ii) those that don't (that is, $D1$ is a ``deleted'' key).
By a careful case analysis,\footnote
{In case (i), to minimize cost, the top two levels of the tree
must contain $D1$ and two other heavy keys from $\{A1,A2,A3,B4\}$.
$D1$ has to be the right child of the root,
because otherwise a key no larger than $B4$ is,
so the split key at the root has to be no larger than $B4$,
so the three light keys $\{C0,D0,E0\}$ 
have to all be in the right subtree,
so that one of them has to be at level four instead of level three,
increasing the cost by at least 5.
In case (ii), when $D1$ is deleted (not in the subtree),
by a similar argument, one of the light keys has to be at level four,
so $T'$ is best possible.}
subtree $T$ in Fig.~\ref{fig:error small}(a)
is a cheapest (although not unique) tree in class (i), 
while the 7-node subtree $T'$ in Fig.~\ref{fig:error small}(b)
is a cheapest tree in class (ii).
Further, the subtree $T'$ costs 1 more than the subtree $T$.
Hence, the algorithm of~\cite{StephenHuang1984} will choose
$T$, not $T'$, for this subproblem.

However, this choice is incorrect.
Consider not just the cost of tree, but also the effects of the choice
on the deleted keys' costs.
For definiteness, suppose the two deleted nodes become, respectively,
the parent and grandparent of the root of the subtree, 
as in (a) and (b) of the figure.
In (b), $C0$ is one level deeper than it is in (a), which increases the cost by 5,
but \emph{$D1$ is three levels higher}, which decreases the overall cost by $2\times 3 = 6$,
for a net decrease of 1 unit.
Hence, using $T$ instead of $T'$ ends up costing the overall solution 1 unit more.

This observation is the basis of the complete counterexample
shown in Fig.~\ref{fig:error bad}.
The counterexample extends the smaller example above 
by  appending two ``neutral'' subintervals, with 7 and 15 keys, respectively,
each of which (without any deletions) admits a self-contained balanced tree.
Keys are ordered alphabetically.
On this instance,
the algorithm of~\cite{StephenHuang1984} (and their Lemma~4) fail,
as they choose $T$ instead of $T'$ for the subproblem.
Fig.~\ref{fig:error bad}(a) shows the tree computed by their recurrence, of cost 1763.
(This can be verified by executing the Python code for the algorithm in Appendix~\ref{sec:code}.)
Fig.~\ref{fig:error bad}(b) shows a tree that costs 1 less.
This proves Thm.~\ref{thm:flaw}.


\paragraph{Spuler's thesis.}
In addition to Spuler's conjecture (and \TwCompTree algorithms that rely on his conjecture),
Spuler's thesis~\cite[\S6.4.1 Conj.\,1]{Spuler1994a}
also presents code for two additional algorithms that he claims, without proof, 
compute optimal \TwCompTree{}s independently of his conjecture.

First,~\cite[\S6.4.1]{Spuler1994a} 
gives code for the problem restricted to successful queries ($\Queries = \Keys$),
which it describes as a ``straightforward'' modification of Huang \etal's algorithm~\cite{StephenHuang1984} for generalized split trees
(an algorithm that we now know is incorrect, per our Thm.~\ref{thm:flaw}).
Correctness is addressed only by the remark that
\emph{``The changes to the optimal generalized binary split tree algorithm 
of Huang and Wong~\cite{StephenHuang1984}
to produce optimal generalized two-way comparison trees\,\footnote
{Spuler's \emph{generalized two-way comparison trees}
are exactly \TwCompTree{}s as defined herein.}%
are quite straight forward.''}

Secondly,~\cite[\S6.5]{Spuler1994a} 
gives code for the case of unrestricted queries,
which it describes as a ``not difficult'' modification 
of the preceding algorithm in \S6.4.1.
Correctness is addressed only by the following remark:
\emph{``The algorithm of the previous section assumes that $\alpha_i=0$ for all $i$.
However, the improvement of the algorithm to allow non-zero values of $\alpha$ is not difficult.''}
In contrast, Huang \etal explicitly mention that they were unable 
to generalize their algorithm to unrestricted queries~\cite{StephenHuang1984}.

Neither of Spuler's two proposed algorithms
is published in a peer-reviewed venue
(although they are referred to in~\cite{Spuler1994}).
They have no correctness proofs,
and are based on~\cite{StephenHuang1984},
which we now know is incorrect.
Given these considerations, 
we judge that Anderson \etal~\cite{Anderson2002}
give the first correct proof of a poly-time algorithm 
to find optimal \TwCompTree{}s
when only successful queries are allowed ($\Keys=\Queries$),
and that this paper gives the first correct proof for the general case.




\end{document}